\documentclass[12pt,leqno,oneside]{amsart}

\usepackage{amsmath}
\usepackage{amsaddr}
\usepackage[authoryear,round]{natbib}
\usepackage{amsfonts}
\usepackage{amssymb}
\usepackage{times}
\usepackage{multirow}
\usepackage{color}
\usepackage[normalem]{ulem}
\usepackage{graphicx}
\usepackage{epstopdf}


\parskip 3pt

\vfuzz2pt \hfuzz2pt

\newtheorem{theorem}{Theorem}
\newtheorem{lemma}[theorem]{Lemma}
\newtheorem{conjecture}[theorem]{Conjecture}
\newtheorem{cor}[theorem]{Corollary}

\theoremstyle{definition}
\newtheorem{definition}[theorem]{Definition}
\newtheorem{remark}{Remark}

\def\:={\mathrel{\mathop:}=}

\newcommand{\commentout}[1]{}

\def\g{\gamma}

\def\d{{\delta}}

\def\D{\Delta}

\def\and{\wedge}

\def\and{\wedge}
\def\d{\delta}

\def\R{\mathbb{R}}

\def\hide#1{}

\setlength{\marginparwidth}{1.2in}
\let\oldmarginpar\marginpar
\renewcommand\marginpar[1]{\-\oldmarginpar[\raggedleft\footnotesize #1]%
{\raggedright\footnotesize #1}}

\def\head{\mathsf h}
\def\tail{\mathsf t}

\DeclareMathOperator{\Succ}{succ}

\newcommand \abs[1] {|{#1}|}

\def \ve{\varepsilon}
\newcommand \event[1]{\{{#1}\}}

\begin{document}
\title{How to Gamble Against All Odds}
\author{Gilad Bavly}
\address{
Department of Statistics and Operations Research\\
School of Mathematical Sciences\\
Tel Aviv University
}
\email{gilbav@math.huji.ac.il}
\thanks{
Gilad Bavly acknowledges ISF grants 538/11 and 323/13, and BSF grant 2010253.
}
\author{Ron Peretz}
\address{Department of Mathematics\\London School of Economics and Political Science}
\email{ronprtz@gmail.com}
\begin{abstract}
A decision maker observes the evolving state of the world while constantly trying to predict the next state given the history of past states. The ability to benefit from such predictions depends not only on the ability to recognize patters in history, but also on the range of actions available to the decision maker.

We assume there are two possible states of the world. The decision maker is a gambler who has to bet a certain amount of money on the bits of an announced binary sequence of states. If he makes a correct prediction he wins his wager, otherwise he loses it.

We compare the power of betting strategies (aka \emph{martingales}) whose wagers take values in different sets of reals. A martingale whose wagers take values in a set $A$ is called an $A$-martingale. A set of reals $B$ \emph{anticipates} a set $A$, if for every $A$-martingale there is a countable set of $B$-martingales, such that on every binary sequence on which the $A$-martingale gains an infinite amount at least one of the $B$-martingales gains an infinite amount, too.

We show that for two important classes of pairs of sets $A$ and $B$, $B$ anticipates $A$ if and only if the closure of $B$ contains $rA$, for some positive $r$. One class is when $A$ is bounded and $B$ is bounded away from zero; the other class is when $B$ is well ordered (has no left-accumulation points). Our results generalize several recent results in algorithmic randomness and answer a question posed by \cite{chalcraft12}.\end{abstract}
\maketitle
\noindent \textbf{Keywords:} repeated games, gambling, algorithmic randomness, pseudo-randomness, predictability.\\
\textbf{JEL classification:} C72, C73.

\section{Introduction}
Randomness and computation are related to two basic elements of boun\-ded rationality modeling: (a) players cannot generate truly random actions; (b) they cannot implement arbitrarily complex strategies. When rationality is computationally constrained, true randomness can be replaced by pseudo-randomness: one need not generate truly random actions, just actions whose pattern is too complicated for any rationally bounded agent to be able to recognize.

Algorithmic randomness is the field of computer science that studies the relations between computability and randomness. It provides the rigorous definitions and analytic tools to address questions involving the tension between the two above elements of bounded rationality. On the other hand, the present paper applies game theory to solve a problem from algorithmic randomness by recasting the problem as an extensive form game.

\subsection{Gambling house game}

Gambler 0, the cousin of the casino owner, repeatedly bets on red/black. She is only allowed to wager even (positive) integers, i.e., she can bet 4 dollars on black, then 10 on red, then 2 on black, etc. (``betting 4 on black'' means that you gain 4 if black occurred, and lose 4 if red did). Then the \emph{regular} gamblers make their bets, and they are only allowed to wager odd integers. The red/black outcome is not decided by a lottery; rather, the casino owner chooses red or black. He attempts to aid gambler 0 by his choices.\footnote{Following the standard algorithmic randomness formalism, it is assumed that the gamblers do not observe each other's bets or gains, but only observe the casino sequence of reds and blacks. This assumption turns out to be immaterial for our results.}

All gamblers start with some finite, not necessary equal, wealth. Going into debt is not allowed, i.e., you cannot bet more than you have. Say that a gambler ``succeeds'' if, along the infinite stream of bets, her wealth tends to infinity. The ``home team'', consisting of the casino owner and gambler 0, wins if and only if gambler 0 succeeds while the other gamblers do not.
\par
In the case depicted above, gambler 0 uses even integers as wagers, and the others use odd wagers. We will see that the home team can guarantee a win in this case and that they could not had it been the other way round, namely, odd integers for gambler 0 and even integers for the others. In general, let $A$ be the wagers allowed to gambler 0, and $B$ those allowed to the other gamblers ($A,B$ being any pair of subsets of the positive real numbers). We ask for which $A$ and $B$ the home team can \emph{guarantee a win}, in the following sense: gambler 0 can choose a strategy that guarantees that for any strategies that the other gamblers choose the casino can choose a sequence so that the home team wins.
\par
As a simple example, suppose that apart from gambler 0 there is only one more, regular, gambler (call him gambler 1), and first, let $A=\{2\}$ and $B=\{1\}$. Then the home team cannot guarantee a win: for any pure strategy of gambler 0, let gambler 1 employ the strategy that is a copy of it, only wagering 1 instead of 2. Then the gains (or losses) of 1 are exactly half the gains of 0. Therefore, if gambler 0 succeeds, so does gambler 1. Second, let $B$ be as before, and let $A=\{1,2\}$. Then the home team can guarantee a win as follows. The strategy of gambler 0 is simple: she always bets on red; at the start she wagers two dollars, and keeps doing this as long as she wins. Then, after the first loss, she switches to wagering one dollar ad infinitum. The casino strategy chooses, in a first phase, red every time, until the wealth of 0 exceeds 1's wealth by more than three dollars (note that this is bound to eventually happen, since in the first phase 0 gains two every time, while 1 gains at most one). Then it chooses black once (after which the wealth of 0 still exceeds the wealth of 1, as 0 lost two dollars and 1 gained no more than one dollar). Then comes the second phase, where the casino always chooses the opposite color to what gambler 1 bet on, and chooses red if 1 did not bet anything. This strategy of the casino ensures that, during the second phase, 1 cannot make a non-zero bet more than a finite number of times before he goes broke (and if 1 does go broke, 0 still does not). Hence, whatever strategy he chooses, 1 will not succeed, while 0 will.
\subsection{Computability, randomness and unpredictability}
\par
This casino game is related to the notions of randomness and computability. In the theory of \emph{algorithmic randomness}, a sequence of zeros and ones is called \emph{computably random} if there exists no computable strategy that succeeds against it (thus, the sequence is in some sense unpredictable). For a set $A$ of positive real numbers, a sequence is \emph{$A$-valued random} if there exists no computable gambling strategy (aka ``martingale''), with wagers in $A$, that succeeds against it. Hence we may compare two sets of reals $A$ and $B$, and ask whether $B$-valued randomness implies $A$-valued randomness. In other words, whether for any casino sequence $x$, and a computable $A$-martingale (gambling strategy with wagers in $A$) that succeeds on $x$, there exists a $B$-martingale that succeeds on $x$. And in terms of our casino game, suppose that \emph{every} computable $B$-martingale is employed by some regular gambler. Then the question is whether there exist a computable $A$-martingale and a casino sequence, such that the home team wins.
\par
There are countably many computable strategies. In many cases this countability, rather than the specific type of admissible strategies, is the essential point in the analysis. This naturally leads to the following setting (Definition \ref{def countable anticipation}): gambler 0's wagers are in $A$, and there are countably many regular players, whose wagers are in $B$.
We will say that $A$ \emph{evades} $B$ if the home team can guarantee a win. This is the central notion in our paper. Thus, we are not directly concerned with computability, or other complexity considerations\footnote{Rod Downey (private communication) pointed out that in the case where $A$ and $B$ are recursive sets of rational numbers our analysis applies to the question whether every $B$-valued random is $A$-valued     random.}, but rather take a more abstract view: we only require that the number of strategies against which the home team needs to concurrently combat is countable (See also Remark \ref{rem:effective} below). One may consider a few variants of this settings (see below).
\par
If $A$ does not evade $B$, we say that $B$ \emph{anticipates} $A$. If $B$ contains $A$, then clearly $B$ anticipates $A$, because the strategy of gambler 1 can be an exact copy of 0's strategy (and there is even no need for countably many regular gamblers, one is enough). Therefore, if 0 succeeds, so does 1. Similarly, if $B$ contains a multiple of $A$, i.e., $B \supseteq  r A$ for some $r > 0$, then $B$ anticipates $A$: as in the example above, the $B$-strategy can be the same as the $A$-strategy up to the multiple $r$ (i.e., the wagers are multiplied by $r$), and therefore the gains are the same, multiplied by $r$. For example, the even integers anticipate the odd integers (or the whole set of integers, for that matter).
\par
Thus, $B$ containing a multiple of $A$ is a sufficient condition for $B$ to anticipate $A$. Is it also a necessary condition? \cite{chalcraft12} showed that if $A$ and $B$ are finite, then it is necessary. They asked whether this characterization extends to infinite $A$ and $B$ (note that their framework is that of algorithmic randomness, i.e., it only allows for computable martingales). In other words:
\begin{itemize}
\item[$(*)$] Does $B$ not containing a multiple of $A$ imply that $A$ evades $B$?
\end{itemize}
\par
\subsection{Our contribution}
Theorem~\ref{thm:suffic} implies a negative answer to this question.\footnote{We restrict our attention to closed sets, since every set anticipates its closure (Lemma~\ref{lem close}).} Thus, for example, the segment $[0,1]$ anticipates the set $\mathbb R_+$, although it does not contain any multiple of it. Still, although $(*)$ does not hold in general, we will see classes of pairs $A,B$ where $(*)$ does hold. Theorem~\ref{theorem bounded} says that if $A$ is bounded, and $B\setminus \{0\}$ is bounded away from 0, then $(*)$ holds. Thus, for example, the set $\{1,\pi\}$ evades the positive integers.
Theorem \ref{theorem well ordered} says that if $B$ is well-ordered, then $(*)$ holds. Thus, for example, the set of even integers evades the odds.
\par
Now let us consider the following version of our casino game, whose description is more like that of a ``classic'' repeated game. At each stage of the repeated game, first gambler 0 makes a bet; then, after observing it, the (team of) regular gamblers each make their bets; and then the casino chooses a red/black outcome. All of our results apply to this version as well.
\par
The difference between our definition of evasion and the description of this game is that in the former setting, the regular gamblers knew the \emph{strategy} of gambler 0 (i.e., including her future actions), gambler 0 did not observe their past actions, and the casino knew the strategies of all gamblers. One may consider many such variations to the settings, and our results are robust to all of them -- whether gambler 0 observes anything or not, whether the regular gamblers know her future bets or not, and whether the casino knows the future bets or not.
\par

Previous work on martingales with restricted wagers (\cite{bienvenu12}, \cite{teutsch13}, \cite{peretz13}) employed various solutions in various specific situations. The present paper proposes a systematic treatment that applies to most of the previously studied situations, as well as many others. Our proofs are elementary and constructive. Also, the proposed construction is recursive: it does not rely on what the strategies do in the future; therefore we believe our method can be useful in many frameworks where computational or other constraints are present.
\par
The rest of the paper is organized as follows. Section \ref{sec:Martingales} presents the definitions, results, and a few examples, as well as a discussion of related previous work. The next two sections contain proofs. Section \ref{sec:Discussion} points to open problems and further directions.
\section{Definitions and Results}\label{sec:Martingales}
A \emph{martingale} is a gambling strategy that bets on bits of a binary sequence.
Formally, it is a function $M:\{\head,\tail\}^{*}\to\mathbb R$ that satisfies
\[
M(\sigma)=\frac{M(\sigma\head)+M(\sigma\tail)}{2},
\]
for every string $\sigma\in\{\head,\tail\}^{*}$.

The \emph{increment} of $M$ at $\sigma\in\{\head,\tail\}^{*}$ is defined as
\[
M'(\sigma)= M(\sigma\head)-M(\sigma).
\]

For $A\subset\mathbb R_+$, we say that $M$ is an \emph{$A$-martingale} if $|M'(\sigma)|\in A$, for every $\sigma\in\{\head,\tail\}^{*}$.

The empty string is denoted $\varepsilon$ and $M(\varepsilon)$ is called the
\emph{initial value} of $M$. Note that a martingale is determined by its initial
value and its increments.

The initial sub-string of length $n$ of a binary sequence, $X\in\{\head,\tail\}^{\infty}$, is denoted $X\restriction n$. A martingale $M$ \emph{succeeds} on $X$, if $\lim_{n\to\infty}M(X\restriction n)=\infty$ and $M(X\restriction n)\geq |M'(X\restriction n)|$, for every $n$. The latter condition asserts that $M$ doesn't bet on money it doesn't have. The set of sequences on which $M$ succeeds is denoted $\Succ(M)$. A martingale $N$ \emph{dominates} $M$ if $\Succ(N)\supseteq\Succ(M)$, and a set of martingales $\mathcal N$ dominates $M$ if $\bigcup_{N\in\mathcal N}\Succ(N)\supseteq\Succ(M)$.

The following are non-standard definitions.
\begin{definition}\label{def single anticipation}
A set $B\subseteq\mathbb R_+$ \emph{singly anticipates} a set $A\subseteq\mathbb R_+$, if every $A$-martingale is dominated by some $B$-martingale. If $A$ singly anticipates $B$ and $B$ singly anticipates $A$, we say that $A$ and $B$ are \emph{strongly equivalent}.
\end{definition}

\begin{definition}\label{def countable anticipation}
A set $B\subseteq\mathbb R_+$ \emph{countably anticipates} (\emph{anticipates}, for short) a set $A\subseteq\mathbb R_+$, if every $A$-martingale is dominated by a countable set of $B$-martingales. If $A$ anticipates $B$ and $B$ anticipates $A$, we say that $A$ and $B$ are \emph{(weakly) equivalent}. If $B$ does not anticipate $A$ we say that $A$ \emph{evades} $B$.
\end{definition}

Note that both ``singly anticipates'' and ``anticipates'' are reflexive and transitive relations (namely, they are preorders),\footnote{The evasion relation is anti-reflexive and is not transitive.} and that single anticipation implies anticipation. Also, if $A\subseteq B$ then $B$ singly anticipates $A$.

\begin{remark}\label{rem:effective}
The motivation for the definition of countable anticipation comes from the study of algorithmic randomness (see, e.g., \cite{downey-online}), where it is natural to consider the countable set of all $B$-martingales that are computable relative to $M$  (see \cite{peretz13}). Formally, a set $B\subseteq\mathbb R_+$ \emph{effectively anticipates} a set $A\subseteq\mathbb R_+$, if for every $A$-martingale $M$ and every sequence $X\in\Succ(M)$, there is a $B$-martingale, computable relative to $M$, that succeeds on $X$. More generally, one could define in the same fashion a ``$\mathcal C$-anticipation'' relation with respect to any complexity class $\mathcal C$. Our main focus will be on countable anticipation, and specifically on sets $A$ and $B$ such that $B$ does not countably anticipate $A$; and therefore $B$ does not $\mathcal C$-anticipate $A$ for any complexity class $\mathcal C$. When we present cases in which anticipation does hold, the dominating martingales will usually be fairly simple relative to the dominated martingales. We do not presume to rigorously address the computational complexity of those reductions, though.
\end{remark}

The topological closure of a set $A\subseteq\mathbb R_+$ is denoted $\bar A$. The following lemma says that we can restrict our attention to closed subsets of $\mathbb R_+$.
\begin{lemma}\label{lem close}
Every subset of $\mathbb R_+$ is strongly equivalent to its closure.
\end{lemma}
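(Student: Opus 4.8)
The lemma asks for both ``$\bar A$ singly anticipates $A$'' and ``$A$ singly anticipates $\bar A$''. The first is immediate and is just the observation recorded above: since $A\subseteq\bar A$, every $A$-martingale is a $\bar A$-martingale and hence is dominated by itself. So the whole content is the second direction, and the plan is to take an arbitrary $\bar A$-martingale $M$ and hand-build an $A$-martingale $N$ with $\Succ(N)\supseteq\Succ(M)$. (If $\bar A=\emptyset$ there are no martingales of either kind and nothing to prove, so we may assume $\bar A\neq\emptyset$.)

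The idea is to let $N$ bet on the same outcome as $M$ at every node, with a wager taken from $A$ that approximates $M$'s wager to within a geometrically small error, while carrying a small but strictly positive reserve of cash over $M$. Concretely, set $\delta_n:=2^{-n-1}$; since $|M'(\sigma)|\in\bar A$, fix for each $\sigma$ some $a_\sigma\in A$ with $\bigl|\,|M'(\sigma)|-a_\sigma\,\bigr|<\delta_{|\sigma|}$. Then define $N$ via its initial value $N(\varepsilon):=M(\varepsilon)+1$ and its increments $N'(\sigma):=s_\sigma a_\sigma$, where $s_\sigma$ is the sign of $M'(\sigma)$ (say $+1$ when $M'(\sigma)=0$). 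Because $a_\sigma\in A$, this $N$ is an $A$-martingale, and the difference $b:=N-M$ is a martingale with $b(\varepsilon)=1$ and per-node increment $|b'(\sigma)|=\bigl|\,|M'(\sigma)|-a_\sigma\,\bigr|<\delta_{|\sigma|}$.

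The key step is the reserve bound: from $b(\sigma c)\ge b(\sigma)-\delta_{|\sigma|}$ for $c\in\{\head,\tail\}$ together with $b(\varepsilon)=1$, a one-line induction on $|\sigma|$ yields $2^{-|\sigma|}\le b(\sigma)<2$ for every $\sigma$. Granting this, the domination check is routine: for $X\in\Succ(M)$ one has $N(X\restriction n)=M(X\restriction n)+b(X\restriction n)\ge M(X\restriction n)\to\infty$, and for every $n$
\[
|N'(X\restriction n)|=a_{X\restriction n}\le|M'(X\restriction n)|+\delta_n\le M(X\restriction n)+b(X\restriction n)=N(X\restriction n),
\]
using $M(X\restriction n)\ge|M'(X\restriction n)|$ (valid since $X\in\Succ(M)$) and $b(X\restriction n)\ge 2^{-n}\ge\delta_n$. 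Hence $X\in\Succ(N)$, so $N$ dominates $M$.

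The only thing requiring care \makafgadol and the reason for carrying a reserve at all \makafgadol is that an element of $A$ close to $|M'(\sigma)|$ may be slightly \emph{larger} than it, which on its own could break the no-overbetting requirement for $N$; the reserve $b$ absorbs this, but it must decay slowly enough (here $b(\sigma)\ge 2^{-|\sigma|}$) to outweigh the per-step error ($\delta_{|\sigma|}=2^{-|\sigma|-1}$) at \emph{every} level while having finite total, so that it never reaches $0$ along any branch. Matching the sign of $M'(\sigma)$ is what makes each step's error equal to the small quantity $\bigl|\,|M'(\sigma)|-a_\sigma\,\bigr|$ rather than something of size comparable to $|M'(\sigma)|$, keeping the increments of $b$ under control.
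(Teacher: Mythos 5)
Your proof is correct and follows essentially the same route as the paper's: approximate each increment of the $\bar A$-martingale by an element of $A$ to within a geometrically decaying tolerance $2^{-|\sigma|-1}$, and offset the initial value by a constant reserve so that the accumulated approximation error never exhausts the surplus or violates the no-overbetting condition. Your version is in fact slightly more careful than the paper's (which writes $S'(\sigma)\in A\cap(M'(\sigma)-2^{-|\sigma|},M'(\sigma)+2^{-|\sigma|})$ without attending to the sign of $M'(\sigma)$), but the idea is identical.
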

\begin{proof}{}
Let $A\subset\mathbb R_+$ and let $M$ be an $\bar A$-martingale. Define
an $A$-martingale $S$ by
\begin{align*}
S(\varepsilon)&=M(\varepsilon)+2,\\
S'(\sigma)&\in A\cap (M'(\sigma)-2^{-|\sigma|},M'(\sigma)+2^{-|\sigma|} ),
\end{align*}
where $|\sigma|$ is the length of $\sigma$. Clearly, $S(\sigma) > M(\sigma)$,
for every $\sigma\in\{\head,\tail\}^*$; therefore $\Succ(S)=\Succ(M)$.
\end{proof}

Another simple observation is that for every $A\subseteq\mathbb
R_+$ and $r>0$, $A$ is strongly equivalent to $rA\:=\{ra: a\in A\}$. This
observation leads to the next definition.
\begin{definition}
Let $A,B\subseteq\mathbb R_+$. We say that $A$ and $B$ are \emph{proportional}, if there exists $r>0$ such that {$r A=B$}.
If we only require that $rA\subseteq \bar B$, then $A$ is proportional to a subset of the closure of $B$. In that case we say that $A$ \emph{scales} into $B$.
\end{definition}

From the above and the fact that $A\subseteq B$ implies that $B$ singly anticipates $A$, we have the following lemma.
\begin{lemma}\label{lem similarity}
If $A$ scales into $B$, then $B$ anticipates $A$, for every $A,B\subseteq\mathbb R_+$.
\end{lemma}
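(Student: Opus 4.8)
The plan is to obtain Lemma~\ref{lem similarity} purely by chaining the facts already recorded in this section; no new martingale construction is needed. The ingredients are: (i) if $C\subseteq D$ then every $C$-martingale is itself a $D$-martingale (the membership $|M'(\sigma)|\in C\subseteq D$ is all that is used, and the success condition is unaffected), so $D$ singly anticipates $C$; (ii) for every $C\subseteq\mathbb R_+$ and $r>0$, the sets $C$ and $rC$ are strongly equivalent, via multiplying the initial value and all increments by $r$; and (iii) by Lemma~\ref{lem close}, $B$ is strongly equivalent to $\bar B$. Since single anticipation is transitive and implies countable anticipation, it suffices to build one chain of single anticipations running from $A$ up to $B$.

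So, assuming $A$ scales into $B$, that is, $rA\subseteq\bar B$ for some $r>0$, I would argue: by (i), $\bar B$ singly anticipates $rA$; by (ii), $rA$ singly anticipates $A$; hence by transitivity $\bar B$ singly anticipates $A$; by (iii), $B$ singly anticipates $\bar B$; hence by transitivity $B$ singly anticipates $A$; and finally, single anticipation implies anticipation, which gives the claim. Note that this in fact proves the stronger statement that $B$ \emph{singly} anticipates $A$, so a single $B$-martingale already dominates each $A$-martingale; the weaker form stated in the lemma is all that we shall use later.

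For readers who prefer an explicit dominating martingale to the abstract chain: given an $A$-martingale $M$, the function $rM$ is a $\bar B$-martingale with $\Succ(rM)=\Succ(M)$, and the perturbation used in the proof of Lemma~\ref{lem close} converts it into a $B$-martingale $S$ with $S(\sigma)>rM(\sigma)$ for every $\sigma$, so that $\Succ(S)=\Succ(M)$. I do not expect a genuine obstacle here: the only points that need care are the bookkeeping of directions in the two preorders (invoking each ``strongly equivalent'' on the side that lets the chain close) and verifying that the condition ``$M$ does not bet on money it does not have'' survives both scaling by $r>0$ and the upward perturbation of Lemma~\ref{lem close} --- and both of these are immediate.
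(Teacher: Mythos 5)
Your proposal is correct and matches the paper's own (implicit) derivation: the paper obtains this lemma by exactly the same chain, combining the observation that $A$ is strongly equivalent to $rA$, the fact that $C\subseteq D$ gives single anticipation, and Lemma~\ref{lem close} for passing from $\bar B$ back to $B$. Your remark that the argument in fact yields \emph{single} anticipation is also consistent with the paper, which invokes single anticipation throughout the chain and only states the weaker conclusion.
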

The next two theorems provide conditions under which the converse of Lemma~\ref{lem similarity} also holds.
\begin{theorem}\label{theorem bounded}
For every $A,B\subseteq\mathbb R_+$, if $\sup A<\infty$ and $0\not\in \overline{B\setminus\{0\}}$, then $B$ anticipates $A$ only if $A$ scales into $B$.
\end{theorem}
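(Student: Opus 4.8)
We argue by contraposition: assuming $A$ does not scale into $B$, we exhibit an $A$-martingale $M$ that is dominated by no countable family of $B$-martingales, i.e.\ $A$ evades $B$. By Lemma~\ref{lem close} we may take $A$ and $B$ closed, and since $A$ is strongly equivalent to every dilation $rA$ we normalise so that $\inf(B\setminus\{0\})=1$; write $\alpha:=\sup A<\infty$, and note that $A$ has at least two elements, as otherwise it scales into $B$. The failure of scaling says: for every $r>0$ there is $a\in A$ with $ra\notin B$, and since $B$ is closed we may demand $\mathrm{dist}(ra,B)=:\delta_r>0$.

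The plan is to fix a strategy for gambler $0$ (the martingale $M$) with a rich set of ``favourable'' play sequences, conveniently organised as the infinite branches of a pruned tree $T\subseteq\{\head,\tail\}^*$ — whose branch set we write $[T]$ — on each of which $M$'s wealth tends to $+\infty$, and then, given countably many $B$-martingales $\{N_k\}_{k\in\mathbb N}$, to have the casino build a single sequence $X\in[T]$ by dovetailing, so that for each $k$ the play on $X$ drives $N_k$'s wealth back below a fixed bound infinitely often (hence $N_k$ fails) while $X$ stays inside $T$ (hence $M$ succeeds). Since $\{N_k\}$ is arbitrary this shows $B$ does not anticipate $A$; alternatively, when one only needs the existence of such an $X$, the Baire category theorem applied in $[T]$ suffices once one shows that each $B$-martingale succeeds on only a meagre subset of $[T]$.

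The heart of the matter is the design of $M$ and $T$. Gambler $0$ moves in successive blocks: in a block she first consolidates her wealth along a short segment, accumulating just enough reserve to afford one branching move, and then makes that move by wagering a suitable $a\in A$, so that both continuations remain in $T$ and the block still ends with a net gain bounded below by a fixed positive constant — so $M$ drifts to $+\infty$ on every branch. (That increments are at most $\alpha$ is exactly what forces this block structure and limits how often $T$ may branch.) The wagers at the branching positions are chosen to block proportional imitation: for every candidate ratio $r$, enumerated along the construction, infinitely many branching positions carry a wager $a$ with $\mathrm{dist}(ra,B)\ge\delta_r$. Why this defeats a would-be imitator $N$: at a branching node $\sigma$ where $M$ wagers $a$, unless $N$ makes the exactly proportional move $(N(\sigma)/M(\sigma))\,a$, the ratio $N/M$ strictly rises on one child of $\sigma$ and strictly falls on the other; and once this ratio is near a limit $r$, the proportional value lies within $o(1)$ of $ra\notin B$, so $N$'s wager — an element of $B$, hence at distance $\ge\delta_r$ from $ra$ — cannot realise it, while a zero wager or a wrong-direction wager moves the ratio by a definite amount as well (here $\inf(B\setminus\{0\})=1$ is used). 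The casino can therefore, whenever it attends to $N_k$, either force $N_k$ to overbet and collapse below its own wager, or make its ratio to $M$ erode by a definite amount; dovetailing over $k$, every $N_k$ is knocked down infinitely often while $M$ keeps climbing.

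The main obstacle is precisely making this construction work quantitatively, because its two demands conflict: $M$, with bounded increments, replenishes reserve only slowly, so $T$ must branch sparsely enough for $M$ to keep drifting up on the \emph{worst} branch; yet $T$ must branch often enough, and the branching wagers be placed shrewdly enough against the discreteness of $B$ near its infimum, that no $B$-martingale can shadow $M$ to infinity along a non-negligible set of branches, even under the casino's dovetailing against countably many of them. The hypotheses $\sup A<\infty$ and $0\notin\overline{B\setminus\{0\}}$ are what keep both sides of this balance under control, and turning the ``ratio dichotomy'' into a uniform estimate valid for all $B$-martingales at once is where the real work lies.
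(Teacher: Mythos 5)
Your proposal correctly identifies the overall shape of the argument (contraposition, normalising $\inf(B\setminus\{0\})=1$, comparing the ratio $N/M$, using the failure of scaling together with closedness of $B$ to produce a definite gap $\mathrm{dist}(ra,B)>0$, and using $\inf(B\setminus\{0\})\geq 1$ to bleed an opponent that is forced to stay active), but it stops short of the two steps that actually constitute the proof, and one of the ideas it does commit to is flawed. First, the ``ratio dichotomy'' as you state it does not yield that a non-imitating $N$'s ratio to $M$ ``erodes by a definite amount'': once $N/M$ is close to its limit $L$, the per-step decrease tends to zero, so you cannot conclude anything from ``infinitely many erosions'' without a quantitative link between the limit of the \emph{values} ratio and the ratio of the \emph{increments}. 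The paper supplies exactly this via a discrete l'H\^opital lemma: along the ratio-minimizing extension, $N/M$ converges monotonically to $L$, and then $|N'-L\,M'|$ is small outside a set of times of density zero (this is where $\sup A<\infty$ is used, to bound $\frac1n\sum|M'|$). Second, your plan to ``enumerate candidate ratios $r$ along the construction'' and place wagers $a$ with $\mathrm{dist}(ra,B)\geq\delta_r$ cannot work as stated: the limit ratio $L$ depends on $N$ and on the play and ranges over a continuum, so it cannot be enumerated in advance. The paper sidesteps this by making $M$ \emph{history-independent} with a fixed wager sequence in which every element of a countable dense subset of $A\setminus\{0\}$ recurs with positive density; then whatever $L$ turns out to be, the density statement forces $L\cdot A\subseteq\bar B$, which by non-scaling forces $L=0$.

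The second missing piece is the bookkeeping against countably many opponents. Your block/tree/Baire-category scheme would require showing that each $B$-martingale succeeds on only a meagre subset of $[T]$, which is again precisely the unproved quantitative core; and your claim that each block ends with ``a net gain bounded below by a fixed positive constant'' ignores that when the casino plays adversarially to knock down a fragile $N_k$, the outcome is dictated by $N_k$'s bet and $M$ may lose its own wager at those times. The paper handles this with a priority argument: maintain the invariant $m(t)>\sum_{j\leq k(t)}n_j(t)$, play adversarially to the least active fragile opponent (each such step costs $M$ at most $\sup A=1$ but costs that opponent at least $\inf(B\setminus\{0\})=1$, so the floor-tuple $(\lfloor n_1\rfloor,\ldots,\lfloor n_{k}\rfloor)$ decreases lexicographically and the invariant survives), otherwise ratio-minimize against $N_{k(t)+1}$; padding the list with infinitely many constant-$1$ martingales makes $k(t)\to\infty$ imply $m(t)\to\infty$. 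Without these two components your text is a plan rather than a proof, as you yourself acknowledge in the final paragraph.
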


\begin{theorem}\label{theorem well ordered}
For every $A,B\subseteq\mathbb R_+$, if $B$ is well ordered, namely, $\forall x\in\mathbb R_+\ x\not\in \overline{B\setminus[0,x]}$, then $B$ anticipates $A$ only if $A$ scales into $B$.
\end{theorem}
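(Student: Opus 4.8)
The plan is to prove the contrapositive: if $A$ does not scale into $B$, then $A$ evades $B$ — that is, there is an $A$-martingale $M$ dominated by no countable family of $B$-martingales. In the language of the casino game: gambler~$0$ fixes the strategy $M$ once and for all, and then, against \emph{any} countably many regular gamblers (arbitrary $B$-martingales $N_1,N_2,\dots$), the casino can steer a play on which $M$ tends to infinity while no $N_k$ does.

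I start with reductions. By Lemma~\ref{lem close} we may take $B$ closed, so $B$ is a closed well-ordered subset of $\mathbb R_+$ and $B\setminus\{0\}$ has a least element; rescaling (harmless, since $A$ scales into $B$ iff it scales into any $rB$) we normalise that least element to $1$. The hypothesis that $A$ does not scale into $B$ says exactly that the open sets $\{r>0:ra\notin B\}$, $a\in A$, cover the Lindel\"of space $\mathbb R_+$, so a countable subfamily already covers, and we may replace $A$ by a countable subset that still does not scale into $B$. Finally, a given family $\{N_k\}$ uses only countably many wagers, forming a well-ordered set $B_0\subseteq B$ into which $A$ still does not scale; so it suffices to build a single $A$-martingale $M$ and then, for every such $B_0$, defeat every family of $B_0$-martingales. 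Henceforth $A,B$ are countable, $\min(B\setminus\{0\})\ge1$, $B$ is well ordered, and $A$ does not scale into $B$.

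Gambler~$0$'s strategy $M$: she always bets $\head$, and her wager at a node depends only on its depth (so every play produces the same sequence $w_1,w_2,\dots\in A$), cycling in very long blocks through a fixed enumeration of $A$ in which every value recurs infinitely often; the initial wealth is large enough that the no-debt condition is never binding on the plays below. On the main line (the casino always agrees with $M$) the wealth of $M$ tends to infinity since the wagers are positive; the block structure serves only to leave the casino room to manoeuvre. Given $\{N_k\}$, the casino builds the play by scheduling, in round-robin fashion (so that each $N_k$ is attended to infinitely often), a sequence of \emph{balanced windows}. A window attending to $N_k$ is a finite stretch of stages at whose end the wealth of $M$ has increased by at least $1$ while the wealth of $N_k$ is at most $\max(c_k,1)$, where $c_k$ is $N_k$'s initial wealth — or else $N_k$ has been forced below $1$ and is stuck (with $0\in B$ it can then only bet $0$ forever; with $0\notin B$ it has no legal move at all). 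Inside a window one first notes the cheap cases: whenever $N_k$ bets $0$ or bets against $M$, the casino picks $\head$, so $M$ gains and $N_k$ does not; thus one may assume $N_k$ mirrors $M$ with a wager $v\ge1$, and then the casino decides, stage by stage, to \emph{help} (both $M$ and $N_k$ gain their wagers) or to \emph{attack} (both lose them), according to whether the ratio $v/w$ is small or large relative to a threshold tied to the current block. Since every window raises the wealth of $M$ by at least $1$ (and the wealth of $M$ can be kept from dipping by ordering the help-stages before the attack-stages), $M$ tends to infinity along the resulting play, while each $N_k$'s wealth returns to $\max(c_k,1)$ infinitely often, or is eventually stuck; hence $\{N_k\}$ does not dominate $M$.

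The heart of the proof — and the step I expect to be the real obstacle — is the \emph{core lemma}: that the casino can always carry out such a balanced window. This is precisely where non-scaling and well-orderedness are used, and used together. Because $A$ does not scale into $B$, no single ratio $r$ has $rw\in B$ for every wager value $w$ of $M$; so as $M$'s wager runs through its cycle, the $B$-martingale $N_k$ cannot track $M$ at a fixed proportion, and one must extract from this a \emph{quantitative}, repeatedly occurring mismatch between the ratios $v/w$ it is forced into on large-wager stages and on small-wager stages — which the casino then exploits by helping on the low-ratio type and attacking on the high-ratio type, pushing $M$ up and $N_k$ down. Well-orderedness is what makes the mismatch usable: $B\setminus\{0\}$ is bounded away from $0$, so $N_k$ cannot slip away by betting arbitrarily small amounts (contrast $B=[0,1]$, where shrinking defeats the casino, consistently with Theorem~\ref{thm:suffic}); and each dilate $B/w$ is again well ordered, with a least element $1/w$ and a grid-like structure from below, which furnishes a definite reference ratio and a definite gap for the help/attack bookkeeping. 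The delicate sub-cases are a \emph{defensive} $N_k$ that bets nearly the smallest allowed amount (then it grows too slowly to threaten anything, yet cannot be exactly proportional to $M$ without contradicting non-scaling, so the casino can still keep its wealth from diverging) and an \emph{aggressive} $N_k$ that over-bets on help-stages to outrun the casino (each over-bet is answered by an attack, and since $N_k$'s wealth is bounded by its accumulated gains this is self-limiting). Turning all of this into a proof — fixing the block lengths of $M$ and the thresholds so that the arithmetic of the help/attack counts works out uniformly over all admissible $B$ — is the technical core of the argument.
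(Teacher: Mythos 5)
There is a genuine gap, and you name it yourself: the ``core lemma'' asserting that the casino can always execute a balanced window is exactly the technical heart of the theorem, and you leave it as a plan rather than a proof. Beyond being unproven, the window specification is not achievable as stated. You require that at the end of each window devoted to $N_k$ its wealth is back down to $\max(c_k,1)$; but in a round-robin schedule $N_k$ may gain during the windows devoted to other gamblers and then bet $0$ throughout its own window, in which case the casino has no way to push its wealth back down (your own ``cheap case'' only prevents it from gaining further, it does not make it lose). The coordination across countably many opponents cannot be handled by round-robin; the paper resolves it with a priority mechanism: an opponent $N_j$ is declared \emph{fragile} once $M$'s surplus exceeds $j+n_1+\cdots+n_j$ plus a reserve, the casino attacks the lowest-indexed active fragile opponent at \emph{every} stage, and ratio-minimizes against the first non-fragile one only when all fragile ones are passive. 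A preservation lemma shows fragility is never lost, and since $B\setminus\{0\}$ is bounded away from $0$ each fragile opponent can be active only finitely often; an induction then shows every $N_i$ eventually becomes fragile and permanently passive. Nothing in your windows plays the role of this invariant.

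Two further points. First, you take $M$ history-independent (wagers a function of depth only), but the paper's $M$ is not: after every $\tail$ it restarts its wager schedule from the smallest value $a_1=1$, and this reset is what makes the bookkeeping (via the auxiliary function $\g$ and the inequality $m(t)-(\g(t)-1)>\nu_p(t)$) survive a loss; indeed the paper explicitly lists ``can Theorem~\ref{theorem well ordered} be achieved by a history-independent martingale?'' as an open problem, so your unsupported claim that depth-dependent wagers suffice is asserting something the authors could not establish. Second, your sketch of how non-scaling and well-orderedness interact is qualitatively right but stops short of the actual mechanism: in the paper one tracks $q(t)=n_i(t)/\mu(t)$, which ratio-minimization makes non-increasing with limit $L$; well-orderedness supplies a uniform gap $\d_j>0$ above each $La_j$ in $B$, so that every over-bet costs $N_i$ at least $\d_j$ in the potential $r(t)=n_i(t)-L\mu(t)\ge 0$, forcing over-bets to stop on long blocks of a fixed wager $a$; and non-scaling then supplies an $a$ with $La$ at positive distance from $\bar B$, so that a sufficiently long block of $a$-wagers drives $r$ negative unless $L=0$. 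That quantitative argument, together with the fragility invariant, is what your proposal would still need to supply.
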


\citet{chalcraft12} studied effective anticipation between finite subsets of $\R$. They showed that (effective) anticipation  is equivalent to containing a proportional set on the domain of finite sets. They further asked whether their result extends to infinite sets. In particular, they asked if $\mathbb Z_+$ anticipates the set $V=\{0\}\cup[1,\infty)$. \cite{peretz13} showed that the set $\{1+\frac{1}{n}\}_{n=1}^\infty\subset V$ evades (i.e., is not anticipated by) $\mathbb Z_+$, and the set $\{\frac{1}{n}\}_{n=1}^\infty$ evades $V$. All of the above results follow immediately from Theorem \ref{theorem bounded}. Furthermore, any set that contains two $\mathbb Q$-linearly independent numbers (e.g., $\{1,\pi\}$) evades $\mathbb Z_+$.

In light of Lemma \ref{lem close}, one may rephrase the above question of \citet{chalcraft12} and ask whether, for infinite sets, anticipation is equivalent to scaling. The answer is still negative (Theorem \ref{thm:suffic}), although Theorems \ref{theorem bounded} and \ref{theorem well ordered} above gave some conditions under which the equivalence does hold.

Theorem \ref{theorem well ordered} says that if, for example, $A = \mathbb Z_+$ and $B$ is a subset of $\mathbb Z_+$ whose density is zero, then $A$ evades $B$. This is because $B$ is well ordered, and $A$ does not scale into $B$ (by the zero density). Another example is when $B$ is the set of all odd integers where, again, $A$ evades $B$. Note that the set of even integers is proportional to $\mathbb Z_+$ and hence is equivalent to $\mathbb Z_+$. Therefore the even integers evade the odd integers, but not vice versa.

Furthermore, we can take any subset of $\mathbb Z_+$ that does not contain an ``ideal'' (i.e., all the integer multiples of some number, namely, a set proportional to $\mathbb Z_+$). If $B$ is of the form $B = \mathbb Z_+ \setminus \event{n \cdot \phi(n)}_{n=1}^\infty$ for some function $\phi : \mathbb N \to \mathbb N$, then $\mathbb Z_+$ evades $B$ even when, for example, the function $\phi$ grows very rapidly. In particular, the density of $B$ could equal one.

The previous theorems gave some necessary conditions for anticipation. The next theorem gives a sufficient condition.
\begin{theorem}\label{thm:suffic}
Let $A,B\subset\R_+$\,.
For $x>0$, let
$$
P(x)=
\{ t \geq 0 : t\cdot (A\cap [0,x])  \subseteq \bar{B}\cup \event{0} \,\}.
$$
For any $M \geq 0$ \,let $q_M (x) =
\max (P(x) \cap [0,M])$.
If for some $M$, $\int_0^\infty q_M(x)dx = \infty$, then $B$ singly anticipates $A$.
\end{theorem}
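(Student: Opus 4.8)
The plan is to turn an arbitrary $A$-martingale $L$ into a single $B$-martingale $N$ that copies each bet of $L$ in the same direction but rescaled by a factor depending only on how large $L$ has grown so far. By Lemma~\ref{lem close} it suffices to construct $N$ with wagers in $\bar B\cup\{0\}$ (a wager of $0$, i.e.\ declining to bet, being always allowed -- this is the role of the ``$\cup\{0\}$'' in the statement) and then pass to a genuine $B$-martingale. Write $q:=q_M$ for the $M$ given by the hypothesis. I would first record the elementary facts about $q$: each $P(x)$ is closed and contains $0$, so $q(x)=\max(P(x)\cap[0,M])$ is attained and belongs to $P(x)$; $q$ is non-increasing and bounded by $M$, because $x\le x'$ forces $A\cap[0,x]\subseteq A\cap[0,x']$ and hence $P(x')\subseteq P(x)$; and $\int_0^\infty q=\infty$, which in particular forces $q(x)>0$ for every $x$ (otherwise $q$ would vanish on a half-line and have finite integral). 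Now for $\sigma\in\{\head,\tail\}^*$ set $\rho(\sigma)=\max_{0\le j\le|\sigma|}L(\sigma\restriction j)$, the running maximum of $L$ along $\sigma$, and $\tau(\sigma)=q(\rho(\sigma))$; define $N$ by $N(\varepsilon)=M\cdot L(\varepsilon)$, with $N'(\sigma)=\tau(\sigma)\,L'(\sigma)$ whenever $|L'(\sigma)|\le L(\sigma)$, and $N'(\sigma)$ chosen arbitrarily in $\bar B\cup\{0\}$ at the (off-success-path) nodes where $L$ over-bets. (We may assume $L(\varepsilon)>0$, else $\Succ(L)=\emptyset$.)

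Since $\rho(\sigma)\ge L(\sigma)\ge|L'(\sigma)|\in A$, we get $|L'(\sigma)|\in A\cap[0,\rho(\sigma)]$ and $\tau(\sigma)\in P(\rho(\sigma))$, so $|N'(\sigma)|=\tau(\sigma)|L'(\sigma)|\in\bar B\cup\{0\}$ and $N$ is admissible; also $\tau$ is non-increasing along every path, as $\rho$ is non-decreasing and $q$ non-increasing. Fix $X\in\Succ(L)$ and abbreviate $L_n=L(X\restriction n)$, $R_n=\rho(X\restriction n)$, $t_n=\tau(X\restriction n)=q(R_n)$. Since $L$ never over-bets on a success sequence, $N(X\restriction(k+1))-N(X\restriction k)=t_k(L_{k+1}-L_k)$ for all $k$, and summation by parts yields
\[
N(X\restriction n)=N(\varepsilon)-t_0L_0+t_{n-1}L_n+\sum_{k=1}^{n-1}(t_{k-1}-t_k)L_k .
\]
On $\Succ(L)$ one has $L_k\ge|L'(X\restriction k)|\ge0$ and $t_{k-1}\ge t_k\ge0$, so every term after $N(\varepsilon)-t_0L_0$ is non-negative; as $t_0\le M$ and $N(\varepsilon)=ML_0$, this already gives $N(X\restriction n)\ge t_{n-1}L_n\ge t_n|L'(X\restriction n)|=|N'(X\restriction n)|\ge0$, so $N$ never over-bets on $X$. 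It remains to prove $t_{n-1}L_n+\sum_{k=1}^{n-1}(t_{k-1}-t_k)L_k\to\infty$.

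This divergence is the heart of the argument, and I would split according to $q(\infty):=\lim_{x\to\infty}q(x)$. If $q(\infty)>0$, then $t_{n-1}=q(R_{n-1})\ge q(\infty)$ while $L_n\to\infty$, so $t_{n-1}L_n\to\infty$ and we are done. If $q(\infty)=0$: since $R_n\ge L_n\to\infty$, $L$ sets infinitely many successive records $r_0<r_1<r_2<\cdots$, and a summand $(t_{k-1}-t_k)L_k$ is non-zero only at a step $k\ge1$ that sets a new record, where it equals $(q(r_{i-1})-q(r_i))\,r_i$; hence $\sum_{k\ge1}(t_{k-1}-t_k)L_k=\sum_{i\ge1}(q(r_{i-1})-q(r_i))\,r_i$. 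Reading $q(r_{i-1})-q(r_i)$ as the mass that the non-negative Stieltjes measure $d(-q)$ assigns to $(r_{i-1},r_i]$ and using $s\le r_i$ on that interval, this series is at least $\int_{(r_0,\infty)}s\,d(-q)(s)\ge\int_{(r_0,\infty)}(s-r_0)\,d(-q)(s)$, which by Tonelli's theorem equals $\int_{r_0}^{\infty}q(u)\,du=\int_0^\infty q-\int_0^{r_0}q\ge\infty-Mr_0=\infty$. So $N(X\restriction n)\to\infty$ here too, hence $X\in\Succ(N)$. As $X\in\Succ(L)$ was arbitrary, $N$ dominates $L$; since $L$ was arbitrary, $B$ singly anticipates $A$.

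I expect the main obstacle to be the right choice of rescaling: taking it to be $q_M$ evaluated at the \emph{running maximum of $L$} is what simultaneously makes the wagers of $N$ admissible, makes the scale monotone along paths (so summation by parts is effective), and keeps $N$ non-negative; after that, the point is that the decrements of this scale are concentrated at the record times, so their $L$-weighted sum dominates the divergent integral $\int_0^\infty q_M$. The reduction to closed $B$, the handling of over-bets and zero wagers, and the summation-by-parts bookkeeping are routine.
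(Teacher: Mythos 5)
Your argument is correct and proves the theorem, and at its core it is the same construction as the paper's: replicate the $A$-martingale's bets, rescaled by a non-increasing function of its wealth. The implementation, however, is genuinely different. The paper (via the equivalent Theorem~\ref{thm:suffic}$^*$) takes $S'(\sigma)=f(M(\sigma))M'(\sigma)$ with the scale evaluated at the \emph{current} value of the $A$-martingale, and then both non-negativity and divergence follow in one line from the monotonicity of $f$: $S(X\restriction n+1)-S(X\restriction n)\ge\int_{M(X\restriction n)}^{M(X\restriction n+1)}f(x)\,\mathrm dx$, which telescopes to $S(X\restriction n)\ge\int_0^{M(X\restriction n)}f\to\infty$. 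You instead evaluate the scale at the \emph{running maximum}; this buys you a scale that is monotone along every path and hence a clean summation-by-parts identity, but it costs you the case split on $q(\infty)$ and the record-time analysis. Note that evaluating at the current value is already admissible for exactly the reason you give for the running maximum (on a success path $|L'(\sigma)|\le L(\sigma)$, so $q(L(\sigma))|L'(\sigma)|\in\bar B\cup\{0\}$), and the integral comparison then replaces your entire Case~2. One small wrinkle there: $q_M$ need not be right-continuous, so reading $q(r_{i-1})-q(r_i)$ as the $d(-q)$-mass of $(r_{i-1},r_i]$ is off by the jumps of $q$ at the points $r_i$. The conclusion is still true and the fix is elementary --- Abel-summing once more gives $\sum_{i\le I}(q(r_{i-1})-q(r_i))\,r_i\ge q(r_0)r_1+\int_{r_1}^{r_I}q-q(r_I)r_I$, and $\int_{r_1}^{r}q-q(r)r\to\infty$ whenever $q(r)\to0$ and $\int q=\infty$ --- but as written the Tonelli step should either use the right-continuous modification consistently (with a word on why the atoms do not spoil the lower bound) or be replaced by this direct computation. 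The remaining points (closedness of $P(x)$ and attainment of the max, monotonicity of $q_M$, the treatment of over-betting nodes and of the ``$\cup\{0\}$'', the reduction via Lemma~\ref{lem close}) are handled in line with the paper's own conventions.
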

This is equivalent to the following seemingly stronger theorem.
\newtheorem*{nt}{Theorem \ref{thm:suffic}$^*$}
\begin{nt}\label{lemma f}
Let $A,B\subset\R_+$. Suppose there is a non-increasing function $f\colon\R_+\to\R_+$, such that
\begin{enumerate}
\item $\bar{B}\supset f(x)\left(A\cap[0,x]\right)$, for every $x\in\R_+$; and
\item $\int_0^\infty f(x)\,\mathrm d x=\infty$.
\end{enumerate}
Then, $B$ singly anticipates $A$.
\end{nt}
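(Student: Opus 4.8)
The plan is, given an $A$-martingale $M$, to build a single $\bar B$-martingale $N$ that dominates $M$; this suffices because $B$ is strongly equivalent to $\bar B$ (Lemma~\ref{lem close}) and single anticipation is transitive. The martingale $N$ will have wealth tracking the ``potential'' $F(x)=\int_0^x f(t)\,\mathrm dt$ obtained by integrating $f$; the role of the hypotheses is that $F$ is then \emph{concave} (since $f$ is non-increasing) and \emph{unbounded} (by (2)). I would first note $f>0$ everywhere and record the two elementary facts, both from monotonicity, that the proof rests on: $F(x)\ge x\,f(x)$ for every $x$, and, for $0\le a\le m$,
\[
F(m+a)-F(m)=\int_m^{m+a}f\ \le\ a\,f(m)\ \le\ \int_{m-a}^{m}f=F(m)-F(m-a).
\]
(If $f$ is not integrable near $0$ this happens only when the divergence in (2) sits below $\inf A$, where (1) imposes nothing; since $\sup A<\infty$ is already covered by Lemma~\ref{lem similarity}, in the remaining case one first truncates $f$ to a bounded function, preserving (1) and (2), so that $F$ is finite everywhere.)

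Given $M$ --- we may assume $M(\varepsilon)\ge 0$, otherwise $\Succ(M)=\varnothing$ and there is nothing to prove --- define $N$ by $N(\varepsilon)=F(M(\varepsilon))$ and, at a node $\sigma$ at which $M$ stakes $M'(\sigma)=\epsilon_\sigma a_\sigma$ with $a_\sigma=|M'(\sigma)|\in A$ and direction $\epsilon_\sigma\in\{+1,-1\}$, by letting $N$ stake the \emph{same} direction the amount $a_\sigma\,f(x_\sigma)$, where $x_\sigma=\max(a_\sigma,M(\sigma))$; that is, $N'(\sigma)=\epsilon_\sigma\,a_\sigma\,f(x_\sigma)$. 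Since $a_\sigma\in A\cap[0,x_\sigma]$, hypothesis (1) gives $|N'(\sigma)|=a_\sigma f(x_\sigma)\in f(x_\sigma)\,(A\cap[0,x_\sigma])\subseteq\bar B$, so $N$ is a $\bar B$-martingale. The heart of the argument is the invariant
\[
N(\sigma)\ \ge\ F\bigl(M(\sigma)\bigr),
\]
which holds at $\varepsilon$ and is inherited by both children of $\sigma$ as long as $M$ has not overdrawn at $\sigma$, i.e.\ $a_\sigma\le m$ where $m=M(\sigma)$ (so $x_\sigma=m$ and $N$ stakes $a_\sigma f(m)$): in the direction in which $M$ rises to $m+a_\sigma$, $N$ rises to $N(\sigma)+a_\sigma f(m)\ge F(m)+a_\sigma f(m)\ge F(m+a_\sigma)$, and in the opposite direction $N$ falls to $N(\sigma)-a_\sigma f(m)\ge F(m)-a_\sigma f(m)\ge F(m-a_\sigma)$ --- precisely the two inequalities in the display above, which are the concavity of $F$.

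Finally I would check that $N$ dominates $M$. If $X\in\Succ(M)$ then $M$ never overdraws at a prefix of $X$, so the invariant gives $N(X\restriction n)\ge F(M(X\restriction n))$ for all $n$; writing $m_n=M(X\restriction n)$ and $a_n=|M'(X\restriction n)|\le m_n$ we also get $N(X\restriction n)\ge F(m_n)\ge m_n f(m_n)\ge a_n f(m_n)=|N'(X\restriction n)|$, so $N$ itself never overdraws along $X$; and $N(X\restriction n)\ge F(m_n)\to\infty$ since $m_n\to\infty$ and $F$ is non-decreasing and unbounded. Hence $X\in\Succ(N)$, so $\bar B$, and therefore $B$, singly anticipates $A$. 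The main obstacle I anticipate is the bookkeeping that makes the invariant go through: choosing the potential so that a single inequality --- concavity --- bounds both the up-move and the down-move of $F$ by the \emph{same} admissible stake $a_\sigma f(m)$; keeping that stake in $\bar B$ even at nodes where $M$ has already overdrawn (hence the truncation $x_\sigma=\max(a_\sigma,M(\sigma))$, harmless since such nodes lie on no successful play of $M$); and making sure $F$ is finite and large enough ($F(x)\ge x f(x)$) that $N$ never overdraws along a successful play.
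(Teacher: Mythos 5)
Your proof is correct and is essentially the paper's own argument: the paper defines $S(\varepsilon)=f(0)M(\varepsilon)$ and $S'(\sigma)=f(M(\sigma))M'(\sigma)$ and shows by induction that $S(X\restriction n)\ge\int_0^{M(X\restriction n)}f(x)\,\mathrm dx$, which is exactly your invariant $N\ge F(M)$ obtained from the concavity of $F$ (the paper's inequality \eqref{eq integral} is your pair of displayed inequalities). The only differences are that you are slightly more careful at nodes where $M$ overdraws --- the paper's wager $f(M(\sigma))|M'(\sigma)|$ is not obviously in $\bar B$ there, which your choice $x_\sigma=\max(a_\sigma,M(\sigma))$ repairs --- and that you explicitly verify $N$ never overdraws along a successful play of $M$.
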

\begin{proof}[Proof of Theorem \ref{thm:suffic}$^*$]
Let $M$ be an $A$-martingale. By Lemma~\ref{lem close} we may assume that $B$ is closed. Define a $B$-martingale, $S$, by
\begin{figure}
\centering
\includegraphics[width=\textwidth,keepaspectratio=true,trim=100pt 103pt 32pt 79pt,clip=true]{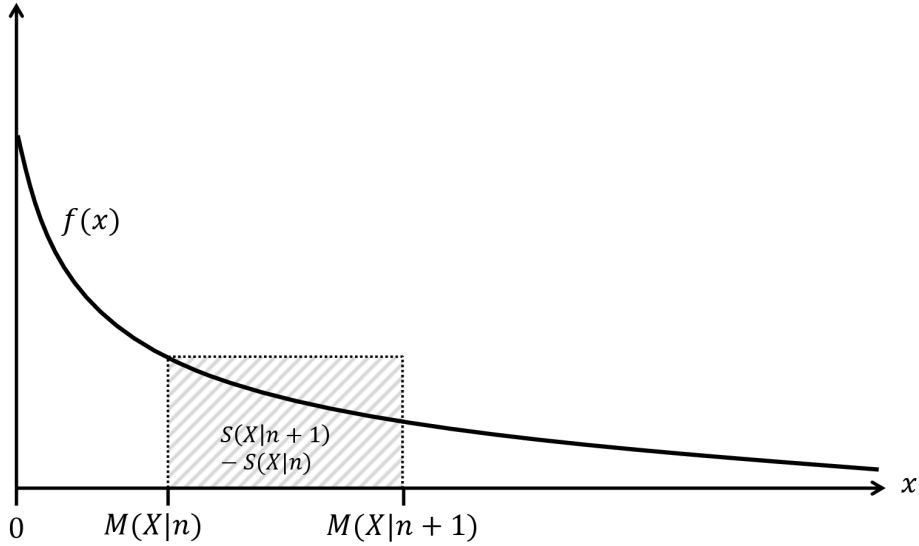} 
\caption{Inequality \eqref{eq integral}}
\end{figure}
\begin{align*}
S(\varepsilon) &= f(0)M(\varepsilon),\\
S'(\sigma)&=f(M(\sigma))M'(\sigma).
\end{align*}
Let $X\in\{\head,\tail\}^{\mathbb N}$ such that $X \in \Succ(M)$, and let $n\in\mathbb N$.
Since $f$ is non-increasing,
\begin{equation}\label{eq integral}
S(X\restriction n+1)-S(X\restriction n)\geq\int_{M(X\restriction n)}^{M(X\restriction n+1)}f(x)\,\mathrm{d}x.
\end{equation}
It follows by induction that
\[
S(X\restriction n)\geq\int_{0}^{M(X\restriction n)}f(x)\,\mathrm{d}x,
\]
which concludes the proof, since $M(X\restriction n) \to \infty$ and $\int_{0}^{\infty}f(x)\,\mathrm{d}x=\infty$.
\end{proof}
If, for example, $A$ scales into $B$, namely, $r A \subseteq \bar{B}$, the function $f$ in Theorem \ref{thm:suffic}$^*$ can be taken to be simply $f(x)=r$. Also note that when $A$ and $B$ are finite, such a function $f$ as in the theorem exists if and only if $A$ scales into $B$.

The theorem tells us, for example, that although $\R_+$ does not scale into the interval $[0,1]$, these two sets are (strongly) equivalent: to see that $[0,1]$ singly anticipates $\mathbb R_+$, apply Theorem \ref{thm:suffic}$^*$ with $f(x)=\min\{\frac 1 x  ,1 \}$.

Another example is the set $A = \{2^n\}_{n=-\infty}^{+\infty}$ being (strongly) equivalent to  $B = \{2^n\}_{n=-\infty}^0$, although $A$ does not scale into $B$. To see this, apply Theorem \ref{thm:suffic}$^*$ with $f(x)=\min\{1/  2^{\lfloor\log_2 x\rfloor} ,1 \}$.

We previously saw, by Theorem \ref{theorem well ordered}, that $A = \mathbb Z_+ $ evades $B = \mathbb Z_+ \setminus \event{n \cdot \phi(n)}_{n=1}^\infty$. Now look at this example in the context of Theorem \ref{thm:suffic}, to see that there is no contradiction. Suppose $\phi(n)$ is increasing, and $\phi(n) \to \infty$. Then for any $x > 0 $, the set $P(x)$ in the theorem is unbounded, i.e., $\sup P(x)=\infty$. Nevertheless, fix any choice of $M$, and note that for every $x$ large enough, $P(x)$ does not contain any nonzero number smaller than $M$, i.e., $P(x) \cap [0,M] = \event{0}$. Thus, for any choice of $M$, $q_M(x) = 0$ for every $x$ large enough. In particular, $\int_0^\infty q_M(x)dx < \infty$.
\section{Proof of Theorem~\ref{theorem bounded}}
Throughout this section $A,B\subset\mathbb R_+$ are two sets satisfying
\begin{align*}
\sup A &<\infty,\\
0&\not\in \overline {B\setminus\{0\}},\\
A&\text{ does not scale into } B.
\end{align*}
We must show $A$ (countably) evades $B$.

Since $A$ does not scale into $B$, one thing that $B$-martingales cannot do in general is to mimic $A$-martingales, not even up to a constant ratio. We use this idea in order to construct a sequence of heads and tails that will separate between the two types of martingales.

\subsection{Ratio minimization}
Let $N$ and $M$ be martingales with $N$ non-negative. We say that $x\in\{\head,\tail\}$ is the \emph{$N/M$-ratio-minimizing outcome} at $\sigma\in\{\head,\tail\}^{<\infty}$ (assuming $M(\sigma)> 0$) if either
\begin{enumerate}
\item $\frac{N(\sigma)}{M(\sigma)}> \frac{N(\sigma x)}{M(\sigma x)}$, or
\item $\frac{N(\sigma)}{M(\sigma)}= \frac{N(\sigma x)}{M(\sigma x)}$ and $M(\sigma x)> M(\sigma)$, or
\item $M'(\sigma)=N'(\sigma)= 0$ and $x=\head$.
\end{enumerate}
In words: our first priority is to make the ratio $N/M$ decrease; if this is impossible (i.e., the increments of $N$ and $M$ are proportional to their value at $\sigma$, and so $N/M$ doesn't change), then we want $M$ to increase so as to insure that $M(\sigma x)>0$; if that is impossible as well (i.e., both increments are 0), we set $x$ to be $\head$, for completeness of definition only.

Our definition extends to finite/infinite extensions of $\sigma$ by saying that $X$ is the length $|X|$ (possibly $|X|=\infty$) \emph{$N/M$-ratio-minimizing extension} of $\sigma$, if $X_{t+1}$ is the $N/M$-ratio-minimizing outcome at $X\restriction t$, for every $|\sigma|\leq t < |X|$.

For any such $N$, $M$, and $\sigma$ the infinite $N/M$-ratio-minimizing extension of $\sigma$, $X$, makes the ratio $N(X\restriction t)/M(X\restriction t)$ monotonically converging to a limit $L\in \mathbb R_+$, as $t\to\infty$. The next lemma will help us argue that the ratio between the increments ${N'(X\restriction t)}/M'(X\restriction t)$ also converges to $L$ in a certain sense.
\begin{lemma}[Discrete l'H\^opital rule]\label{lem:lhopital}
Let $(a_n)_{n=1}^\infty$ and $(b_n)_{n=1}^\infty$ be sequences of real numbers. Assume that $a_n> 0$, for every $n$. If $b_n/{a_n}$ monotonically converges to a limit $L\in\mathbb R$, and $\sup\{\frac 1 n \sum_{k=1}^n |a_{k+1}-a_k|\}<\infty$, then
\begin{equation*}
\lim_{n\to\infty} \frac 1 n \sum_{k=1}^{n} |(b_{k+1}-b_k) - L(a_{k+1}-a_k)|=0.
\end{equation*}
\end{lemma}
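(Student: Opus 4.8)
The plan is to reduce to the ``monotone decreasing'' case and then estimate the two pieces of a telescoping-type decomposition. Write $c_n:=b_n-La_n$, so that the assertion is exactly $\frac1n\sum_{k=1}^n|c_{k+1}-c_k|\to0$. Replacing $(b_n,L)$ by $(-b_n,-L)$ if necessary, I may assume $b_n/a_n$ is non-increasing, so that $\e_n:=b_n/a_n-L$ is a non-increasing sequence with $\e_n\ge0$ and $\e_n\to0$, and $c_n=\e_n a_n$. Put $C:=\sup_n\frac1n\sum_{k=1}^n|a_{k+1}-a_k|<\infty$; telescoping $|a_k-a_1|\le\sum_{j<k}|a_{j+1}-a_j|\le C(k-1)$ gives the crude bound $a_k\le a_1+Ck$ for all $k$, which is all the control on the magnitudes $a_k$ that the hypothesis provides.

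First I would split the increment as
\[
c_{k+1}-c_k=\e_{k+1}(a_{k+1}-a_k)+(\e_{k+1}-\e_k)a_k,
\]
whence, using $a_k>0$, $\e_{k+1}\ge0$ and $\e_k\ge\e_{k+1}$,
\[
|c_{k+1}-c_k|\le \e_{k+1}\,|a_{k+1}-a_k|+(\e_k-\e_{k+1})\,a_k .
\]
It therefore suffices to show that each of the averages $\frac1n\sum_{k=1}^n\e_{k+1}|a_{k+1}-a_k|$ and $\frac1n\sum_{k=1}^n(\e_k-\e_{k+1})a_k$ tends to $0$.

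For the first average I would use $\e_{k+1}\to0$ together with $\e_n\le\e_1$: given $\delta>0$ choose $N$ with $\e_{k+1}<\delta$ for $k\ge N$; the head $\frac1n\sum_{k<N}\e_{k+1}|a_{k+1}-a_k|$ is a fixed number divided by $n$, hence vanishes, while the tail is at most $\delta\cdot\frac1n\sum_{k=1}^n|a_{k+1}-a_k|\le\delta C$, and letting $\delta\downarrow0$ finishes it. For the second average, the bound $a_k\le a_1+Ck$ gives
\[
\frac1n\sum_{k=1}^n(\e_k-\e_{k+1})a_k\le\frac{a_1}{n}(\e_1-\e_{n+1})+\frac{C}{n}\sum_{k=1}^n k(\e_k-\e_{k+1});
\]
the first summand is $O(1/n)$, and Abel summation yields the identity $\sum_{k=1}^n k(\e_k-\e_{k+1})=\sum_{k=1}^n\e_k-n\e_{n+1}$, so the second summand equals $C\bigl(\frac1n\sum_{k=1}^n\e_k-\e_{n+1}\bigr)$, which tends to $0$ because $\e_n\to0$ (Cesàro averages) and $\e_{n+1}\to0$.

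I expect the second average to be the main obstacle: the hypothesis bounds only the \emph{average} increment $|a_{k+1}-a_k|$, not the terms $a_k$, which may grow linearly, so a naive estimate $(\e_k-\e_{k+1})a_k\le(\e_k-\e_{k+1})\cdot(\text{something large})$ does not obviously telescope to something small. The key point is that the linear growth $a_k=O(k)$ is \emph{exactly} matched by the weighted decrement $\sum_k k(\e_k-\e_{k+1})$ collapsing, via Abel summation, to $\sum_k\e_k-n\e_{n+1}$, i.e.\ the monotone convergence $\e_n\to0$ is robust enough to remain Cesàro-negligible after the extra factor $k$.
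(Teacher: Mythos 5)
Your proof is correct, but it takes a genuinely different route from the paper's. The paper first handles the \emph{signed} average $\frac1n\sum\bigl((b_{k+1}-b_k)-L(a_{k+1}-a_k)\bigr)$ by pure telescoping (it is $\frac1n\bigl((b_{n+1}-La_{n+1})-(b_1-La_1)\bigr)$, and $b_{n+1}-La_{n+1}=(\tfrac{b_{n+1}}{a_{n+1}}-L)a_{n+1}=o(n)$ since $a_{n+1}=O(n)$), and then reduces the absolute value via $|x|=2[x]_+-x$ to the positive part alone; monotonicity gives the single one-sided inequality $b_{k+1}-b_k\le\tfrac{b_k}{a_k}(a_{k+1}-a_k)$, so that $[\,b_k'-La_k'\,]_+\le(\tfrac{b_k}{a_k}-L)|a_k'|$, which is killed by exactly the head/tail argument you use for your first piece. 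You instead apply the discrete product rule to $c_k=\varepsilon_k a_k$, which forces you to control the extra term $\frac1n\sum(\varepsilon_k-\varepsilon_{k+1})a_k$ in which $a_k$ may grow linearly; your Abel-summation identity $\sum_{k\le n}k(\varepsilon_k-\varepsilon_{k+1})=\sum_{k\le n}\varepsilon_k-n\varepsilon_{n+1}$ is precisely what absorbs that linear growth, and all your individual steps (the WLOG sign flip, $\varepsilon_k\ge0$, the termwise bound $a_k\le a_1+Ck$ being legitimate because $\varepsilon_k-\varepsilon_{k+1}\ge0$) check out. Both arguments use monotonicity in an essential way; the paper's is a bit shorter because the telescoping step needs no monotonicity at all and only one piece has to be averaged, whereas yours is more self-contained in absolute values and makes explicit where the $O(k)$ growth of $a_k$ is cancelled.
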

\begin{proof}
Since $\frac 1 n \sum_{k=1}^n |a_{k+1}-a_k|$ is bounded, $\frac 1 n \sum_{k=1}^n (a_{k+1}-a_k)$ is bounded, too; therefore
\begin{equation}\label{eq weak}
\lim_{n\to\infty} \frac 1 n \sum_{k=1}^{n} (b_{k+1}-b_k) - L(a_{k+1}-a_k)=0.
\end{equation}
It remains to prove that
\begin{equation}\label{eq strong'}
\lim_{n\to\infty} \frac 1 n \sum_{k=1}^{n} \left[b'_k-L a'_k\right]_+=0,
\end{equation}
where $a'_k=a_{k+1}-a_k$ and similarly $b'_k=b_{k+1}-b_k$.

We may assume w.l.o.g. that $\frac{b_n}{a_n}\searrow L$ (otherwise consider the sequence $(-b_n)_{n=1}^\infty$). Namely, $\frac{b_k}{a_k}\geq \frac{b_{k+1}}{a_{k+1}}$, which implies that $b'_k\leq \frac{b_k}{a_k}a'_k$; hence
\[
\left[b'_k-L a'_k\right]_+ \leq \left[\left(\frac{b_k}{a_k}-L\right)a'_k\right]_+\leq \left(\frac{b_k}{a_k}-L\right)|a'_k|.
\]
Now \eqref{eq strong'} follows since $\frac{b_k}{a_k}$ converges to $L$ and $\frac 1 n \sum_{k=1}^n |a'_k|$ is bounded.
\end{proof}

\begin{cor}\label{cor ratio}
Let $M$ and $N$ be a pair of martingales and $\sigma\in \{\head,\tail\}^{*}$. Assume that $N$ is non-negative, $M(\sigma)>0$, and $M'$ is bounded.\footnote{The assumption that $M'$ is bounded can be relaxed by assuming only that $\frac{1}{N}\sum_{t=0}^{N-1}|M'(X\restriction t)|$ is bounded.} Let $X$ be the infinite $N/M$-ratio-minimizing extension of $\sigma$ and $L=\lim\limits_{t\to\infty}\frac{N(X\restriction t)}{M(X\restriction t)}$. For every $\epsilon>0$ the set
\[
\left\{t:\left\vert N'(X\restriction t) -L\cdot{M'(X\restriction t)} \right\vert > \epsilon \right\}
\]
has zero density.
\end{cor}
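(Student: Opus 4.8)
The plan is to obtain the corollary from the discrete l'H\^opital rule (Lemma~\ref{lem:lhopital}) applied to the trajectories of $M$ and $N$ along $X$, and then to pass from the resulting Ces\`aro estimate to a density estimate by a Markov-type inequality. Concretely, I would set $a_n=M(X\restriction(|\sigma|+n-1))$ and $b_n=N(X\restriction(|\sigma|+n-1))$ for $n\ge 1$, so that $a_1=M(\sigma)>0$. Since $n\mapsto|\sigma|+n-1$ is a fixed finite shift, a density-zero conclusion in $n$ is the same as the desired density-zero conclusion in $t$.

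First I would verify the hypotheses of Lemma~\ref{lem:lhopital}. That $b_n/a_n$ converges monotonically to $L$ is exactly the property of the $N/M$-ratio-minimizing extension recorded just before the corollary. That $a_n>0$ for every $n$ is the one point that needs a short argument: at any node $\tau$ with $M(\tau)>0$ one has $M(\tau\head)+M(\tau\tail)=2M(\tau)>0$, and if some child were non-positive, say $M(\tau\tail)\le 0$, then $M(\tau\head)\ge 2M(\tau)>0$ while $N(\tau\head)\le 2N(\tau)$ because $N\ge 0$, so $N(\tau\head)/M(\tau\head)\le N(\tau)/M(\tau)$; hence the ratio-minimizing outcome always selects a child with $M>0$, and $M(X\restriction t)>0$ for all $t\ge|\sigma|$. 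Finally, $|a_{n+1}-a_n|=|M'(X\restriction(|\sigma|+n-1))|\le\sup_\tau|M'(\tau)|<\infty$ since $M'$ is bounded, so $\frac1n\sum_{k=1}^n|a_{k+1}-a_k|$ is bounded (and under the relaxed hypothesis in the corollary's footnote this Ces\`aro boundedness is assumed directly, which is all the footnoted form of the lemma uses).

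Lemma~\ref{lem:lhopital} now gives $\frac1n\sum_{k=1}^n\bigl|(b_{k+1}-b_k)-L(a_{k+1}-a_k)\bigr|\to 0$, and it remains to recognise the summand. If $X$ extends $X\restriction(|\sigma|+k-1)$ by $\head$ then $b_{k+1}-b_k=N'(X\restriction(|\sigma|+k-1))$ and $a_{k+1}-a_k=M'(X\restriction(|\sigma|+k-1))$; if it extends by $\tail$ then, using $N(\tau\tail)=2N(\tau)-N(\tau\head)$ and likewise for $M$, both differences change sign. Either way $\bigl|(b_{k+1}-b_k)-L(a_{k+1}-a_k)\bigr|=\bigl|N'(X\restriction(|\sigma|+k-1))-L\,M'(X\restriction(|\sigma|+k-1))\bigr|$. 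Setting $c_t=\bigl|N'(X\restriction t)-L\,M'(X\restriction t)\bigr|\ge 0$, we have shown that the Ces\`aro averages of $(c_t)_{t\ge|\sigma|}$ tend to $0$.

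For the last step, fix $\epsilon>0$. Since $\epsilon$ times the number of $t\le n$ with $c_t>\epsilon$ is at most $\sum_{t\le n}c_t$, the Ces\`aro convergence just established forces the ratio (number of such $t$)$/n$ to tend to $0$; that is, $\{t:c_t>\epsilon\}$ has density zero, as asserted. I do not anticipate any serious obstacle: the substantive input is Lemma~\ref{lem:lhopital}, already established, and the only genuinely non-formal verification on top of it is that the ratio-minimizing extension stays in the region $\{M>0\}$, the short case analysis indicated above.
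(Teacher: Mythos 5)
Your proposal is correct and follows essentially the same route as the paper: apply Lemma~\ref{lem:lhopital} to $a_n=M(X\restriction t)$, $b_n=N(X\restriction t)$, identify $\left|(b_{k+1}-b_k)-L(a_{k+1}-a_k)\right|$ with $\left|N'(X\restriction t)-L\,M'(X\restriction t)\right|$ (up to the sign flip when $X_{t+1}=\tail$), and conclude density zero by a Markov-type bound. Your explicit check that the ratio-minimizing extension keeps $M$ positive is a detail the paper leaves implicit, but it is not a departure in method.
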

\begin{proof}
Note that
\begin{multline*}
\left\vert N'(X\restriction t) -L\cdot M'(X\restriction t)\right\vert =\\
\left\vert (N(X\restriction t+1)-N(X\restriction t)) -L\cdot(N(X\restriction t+1)-N(X\restriction t))\right\vert.
\end{multline*}
By Lemma~\ref{lem:lhopital}, we have
\[
\lim_{n\to\infty} \frac 1 n \sum_{k=1}^{n} |N'(X\restriction t) -L\cdot M'(X\restriction t)|=0,
\]
which implies that the density of the set
\[
\left\{t:\left\vert N'(X\restriction t) -L\cdot{M'(X\restriction t)}\right\vert > \epsilon \right\}
\]
is zero, for every $\epsilon>0$.
\end{proof}

The next step is to construct the history-independent $A$-martingale prescribed by Theorem~\ref{theorem bounded}. We formalize the properties of this martingale in the following lemma.
\begin{lemma}\label{lemma 0 ratio}
Let $A,B\subset\mathbb R_+$. Suppose that $\sup A< \infty$ and $A$ does not scale into $B$; then there exists a history-independent $A$-martingale with positive initial value, $M$, such that for every non-negative $B$-martingale, $N$, and every $\sigma\in\{\head,\tail\}^*$ such that $M(\sigma)>0$, the infinite $N/M$-ratio-minimizing extension of $\sigma$, $X$, satisfies
\[
\lim_{t\to\infty}\frac{N(X\restriction t)}{M(X\restriction t)}=0.
\]
\end{lemma}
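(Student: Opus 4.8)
The plan is to reduce the (possibly uncountable) set $A$ to a \emph{countable} subset $A'\subseteq A$ that still does not scale into $B$, then build the history-independent martingale $M$ so that it cycles through $A'$ using each value of $A'$ at a set of stages of positive density, and finally — against an arbitrary non-negative $B$-martingale $N$ — invoke the ratio-minimization construction together with Corollary~\ref{cor ratio} to rule out a positive limiting ratio. \emph{Step 1 (a countable non-scaling subset).} For $a\in A$ set $V_a=\{r>0:ra\notin\bar B\}$; since $r\mapsto ra$ is continuous and $\bar B$ is closed, each $V_a$ is open in $(0,\infty)$, and the hypothesis that $A$ does not scale into $B$ says precisely that $\bigcup_{a\in A}V_a=(0,\infty)$. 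As $(0,\infty)$ is Lindel\"of, some countable $A'\subseteq A$ already satisfies $\bigcup_{a\in A'}V_a=(0,\infty)$, i.e.\ $A'$ does not scale into $B$. (If $0\in A$: when $0\notin B$ one has $0\notin\bar B$, so $B$ is bounded away from $0$, and descending at each node to the smaller value of a non-negative $B$-martingale would drive it below $0$ — so there is no non-negative $B$-martingale and the lemma is vacuous; when $0\in B$ one has $V_0=\emptyset$, so the $A'$ above may be taken inside $(0,\infty)$.)

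\emph{Step 2 (the martingale).} Enumerate $A'=\{c_1,c_2,\dots\}$ and fix a map $\nu\colon\mathbb N\to\{1,2,\dots\}$ whose fibres $F_j=\{t:\nu(t)=j\}$ all have positive density $d_j$: cyclically if $A'$ is finite, and e.g.\ $\nu(t)=1+v_2(t+1)$ (so $d_j=2^{-j}$) if $A'$ is infinite. Let $M$ be the history-independent $A$-martingale with $M(\varepsilon)=1$ and $M'(\sigma)=c_{\nu(|\sigma|)}$. Then $M$ has positive initial value, its increments lie in $A'\subseteq A$ and are bounded by $\sup A<\infty$, and — this is exactly what history-independence buys — $|M'(X\restriction t)|=c_{\nu(t)}$ for \emph{every} branch $X$, so this magnitude depends on $t$ alone and the fibres $F_j$ retain their densities along every path.

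\emph{Step 3 (killing a positive ratio).} Fix a non-negative $B$-martingale $N$ and $\sigma$ with $M(\sigma)>0$, and let $X$ be the infinite $N/M$-ratio-minimizing extension of $\sigma$, so that $N(X\restriction t)/M(X\restriction t)$ decreases to some $L\in\R_+$, as already observed. Suppose $L>0$. Since $M'$ is bounded, Corollary~\ref{cor ratio} gives, for each $\epsilon>0$, that $E_\epsilon=\{t:|N'(X\restriction t)-L\,M'(X\restriction t)|>\epsilon\}$ has density $0$. Fix $j$; as $F_j$ has density $d_j>0$ it is not contained in $E_\epsilon$, so there is $t\in F_j\setminus E_\epsilon$, and for that $t$, using $|M'(X\restriction t)|=c_j$, $\bigl||N'(X\restriction t)|-Lc_j\bigr|\le|N'(X\restriction t)-L\,M'(X\restriction t)|\le\epsilon$. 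Since $|N'(X\restriction t)|\in B$ and $\epsilon>0$ is arbitrary, $Lc_j\in\bar B$; as $j$ was arbitrary, $LA'\subseteq\bar B$ with $L>0$, so $A'$ scales into $B$, contradicting Step 1. Hence $L=0$, which is the assertion of the lemma.

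\emph{Where the difficulty lies.} The substantive point is the interplay of Steps 1 and 2: one cannot in general pass to a \emph{finite} non-scaling subset, because a compactness argument would need $\bar B$ (or the parameter range $(0,\infty)$ of ratios $r$) to be compact, which it need not be; so one settles for a countable $A'$, and the reason a countable set still works is that against a density-zero exceptional set it suffices for each prescribed wager to recur with \emph{positive} density, which a countable cyclic-type schedule can arrange. The small amount of bookkeeping around the value $0$ is handled as indicated in Step 1. Everything else — the monotone convergence of the ratio and the discrete l'H\^opital estimate — is already packaged in Corollary~\ref{cor ratio}.
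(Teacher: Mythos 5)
Your proof is correct and follows essentially the same route as the paper's: a history-independent martingale that cycles through a countable subset of $A$ via the $2$-adic valuation so that each prescribed wager recurs with positive density, combined with Corollary~\ref{cor ratio} to force $L\cdot A'\subseteq\bar B$ and hence $L=0$. The only cosmetic difference is that you extract a countable non-scaling $A'$ by a Lindel\"of argument, whereas the paper takes a countable dense subset of $A\setminus\{0\}$ and uses closedness of $\bar B$ to pass from $L a_n\in\bar B$ for all $n$ to $L\cdot A\subseteq\bar B$; the two devices are interchangeable here.
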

Note that the $A$-martingale, $M$, does not depend on $N$ or $\sigma$, so the same $M$ can be used against any $N$ at any $\sigma$ that leaves $M(\sigma)$ positive.

\begin{proof}[Proof of Lemma \ref{lemma 0 ratio}]
Let $\{a_n\}_{n=0}^\infty$ be a countable dense subset of $A\setminus\{0\}$. For every positive integer $t$, let $n(t)\in\mathbb Z_+$ be the largest integer such that $2^{n(t)}$ divides $t$. Define $x_t\:=a_{n(t)}$. The sequence $\{x_t\}_{t=1}^\infty$ has the property that the set
\begin{equation}\label{eq positive density}
\{t:|x_t-a|<\epsilon\}
\end{equation}
has positive density, for every $\epsilon>0$ and $a\in A$.

Let $M$ be a history-independent martingale whose increment at time $t$ is $x_t$, for every $t\in\mathbb Z_+$ (with an arbitrary positive initial value). Let $N$ be an arbitrary non-negative $B$-martingale. Suppose that $M(\sigma)> 0$ and let $X$ be the infinite $N/M$-ratio-minimizing extension of $\sigma$ and let $L=\lim_{t\to\infty}\frac{N(X\restriction t)}{M(\restriction t)}$. Corollary \ref{cor ratio} and \eqref{eq positive density} guarantee that $L\cdot A\subset \bar B$. By the assumption that $A$ does not scale into $B$, we conclude that $L=0$.
\end{proof}
\subsection{The casino sequence}
In the rest of this section we assume that $\sup A=\inf (B\setminus \{0\})=1$. This is w.l.o.g. since proportional sets are (strongly) equivalent.

We begin with an informal description of the casino sequence. Lemma \ref{lemma 0 ratio} provides a history-independent $A$-martingale, $M$, that can be used against any $B$-martingale.

Given a sequence of $B$-martingales, $N_1,N_2,\ldots$, we start off by ratio-minimizing against $N_1$. When $M$ becomes greater than $N_1$, we proceed to the next stage. We want to make sure that $N_1$ no longer makes any gains. This is done by playing adversarial to $N_1$ whenever he wagers a positive amount.

At times when $N_1$ wagers nothing (i.e., $N_1'=0$), we are free to choose either $\head$ or $\tail$ without risking our primary goal. At those times we turn to ratio-minimizing against $N_2$, while always considering the goal of keeping $N_1$ from making gains a higher priority. Since $\inf \left\{B\setminus\{0\}\right\}=1>0$, it is guaranteed that at some point we will no longer need to concern $N_1$, and hence, at some even further point, $M$ will become greater than $N_1+N_2$.

The process continues recursively, where at each stage our highest priority is to prevent $N_1$ from making gains, then $N_2$, $N_3$, and so on until $N_k$; and  if none of $N_1,\ldots,N_k$ wagers any positive amount, we ratio-minimize against $N_{k+1}$.

When a positive wager of some $N_i$, $i\in\{1,\ldots k\}$, is answered with an adversarial outcome, a new index $k'$ must be calculated, so that $M$ is sufficient to keep $N_1,\ldots,N_{k'}$ from making gains. That is, $M>N_1+\cdots+N_{k'}$.

An inductive argument shows that for every fixed $k$, there is a point in time beyond which none of $N_1,\ldots,N_k$ will ever wager a positive amount; therefore, at some even further point, $M$ becomes greater than $N_1+\cdots+N_{k+1}$; hence the inductive step.

The above explains how the value of each $N_i$ converges to some $L_i \in \mathbb R_+$, and the limit inferior of the value of $M$ is at least $\sum_{i=1}^{\infty}L_i$. In order to make sure that $M$ goes to infinity we include, among the $N_i$s, infinitely many martingales of constant value $1$.

We turn now to a formal description. As mentioned above, we assume without loss of generality that $\sup A=\inf \{B\setminus \{0\}\}=1$. We additionally assume that $0\in B$, and so we can convert arbitrary $B$-martingales to non-negative ones by making them stop betting at the moment they go bankrupt.

Let $M$ be a history-independent $A$-martingale provided by Lemma \ref{lemma 0 ratio}. Let $N_1,N_2,\ldots$ be a sequence of non-negative $B$-martingales. Assume without loss of generality that infinitely many of the $N_i$s are the constant $1$ martingale.

We define a sequence $X\in\{\head,\tail\}^\infty$ recursively. Assume $X\restriction t$ is already defined.

First we introduce some notation. Denote the value of $M$ at time $t$ by $m(t)=M(X\restriction t)$, and similarly  $n_i(t)=N_i(X\restriction t)$, for every $i\in\mathbb N$. Let
\begin{align*}
S_i(t)&=\sum_{j=1}^in_j(t),\\
k(t)&=\max\{i:S_i(t) < m(t)\},\text{ and}\\
S(t)&=S_{  k(t)  }(t).
\end{align*}
Note that the maximum is well defined, since the $n_i(t)$ include infinitely many 1s.

We are now ready to define $X_{t+1}$. We distinguish between two cases: Case I: there exists $1\leq j\leq k(t)$ such that $N_j'(X\restriction t)\neq 0$; Case II: $N_1'(X\restriction t)=\cdots=N_{k(t)}'(X\restriction t)= 0$.

In Case I, let $i=\min\{j:N_j'(X\restriction t)\neq 0\}$ and define
\[
X_{t+1} =
\begin{cases}
  \tail &\text {if $N_{i}'(X\restriction t)>0$,}\\
  \head &\text {if $N_{i}'(X\restriction t)<0$.}
\end{cases}
\]

In Case II, $X_{t+1}$ is the $\frac{N_{k(t)+1}}{M-S(t)}$-ratio-minimizing outcome at $X\restriction t$. Explicitly,
\[
X_{t+1} =
\begin{cases}
  \tail &\text {if $\frac{N_{k(t)+1}'(X\restriction t)}{M'(X\restriction t)}>\frac{n_{k(t)+1}}{m(t)-S(t)}$,}\\
  \head &\text {if $\frac{N_{k(t)+1}'(X\restriction t)}{M'(X\restriction t)}\leq\frac{n_{k(t)+1}}{m(t)-S(t)}$.}
\end{cases}
\]

Consider the tuple $\alpha(t)=(\lfloor n_1(t)\rfloor,\ldots,\lfloor n_{k(t)}(t)\rfloor)$. In Case I, $\alpha(t+1)$ is strictly less that $\alpha(t)$ according to the lexicographic order. In Case II, $\alpha(t)$ is a prefix of $\alpha(t+1)$, and so under a convention in which a prefix of a tuple is greater than that tuple, we have that $\{\alpha(t)\}_{t=1}^\infty$ is a non-increasing sequence.\footnote{Alternatively, one can use the standard lexicographic order where $\alpha(t)$ is appended with an infinite sequence of $\infty$ elements.} Let $k=\liminf\limits_{t\to\infty} k(t)$. It follows that from some point in time, $\alpha$ consists of at least $k$ elements; therefore the first $k$ elements of $\alpha$ must stabilize at some further point in time. Namely, for $t$ large enough we have $\lfloor n_i(t)\rfloor=\lim\limits_{t'\to\infty}\lfloor n_i(t')\rfloor <\infty$, for every $i\leq k$. Since the increments of $n_i(t)$ are bounded below by 1, $n_1(t),\ldots,n_k(t)$ stabilize, too. Also, since $m(t)> S(t)$, we have $\liminf\limits_{t\to\infty}m(t)\geq \lim_{t\to\infty} S_i(t)$, for every $i\leq \liminf\limits_{t\to\infty} k(t)$. Since there are infinitely many $i$'s for which $n_i(t)$ is constantly $1$, the proof of Theorem \ref{theorem bounded} is concluded by showing that $\liminf\limits_{t\to\infty} k(t)=\infty$.

Assume by negation that $\liminf\limits_{t\to\infty} k(t)=k<\infty$. There is a time $T_0$ such that $n_k(t)=\lim\limits_{t'\to\infty}n_k(t')$ and $k(t)\geq k$, for every $t>T_0$. There cannot be a $t>T_0$, for which $k(t)>k$ and $k(t+1)=k$. That would mean a Case I transition from time $t$ to $t+1$ and we would have $n_i(t+1)<n_i(t)$, for some $i \leq k$. It follows that $k(t)=k$, for every $t>T_0$.

From time $T_0$ ratio-minimization against $n_{k+1}$ takes place. Let $l=\linebreak\liminf\limits_{t\to\infty}n_{k+1}(t)$. If $l=0$, then $n_{k+1}(t)<1$, for some $t>T_0$; at this point $n_{k+1}$ stabilizes (otherwise  $N_{k+1}$ would go bankrupt); therefore $n_{k+1}(t)=0$; therefore $k(t)>k$, which is not possible. If $l>0$, then by Lemma~\ref{lemma 0 ratio}, there must be some time $t>T_0$ in which $m(t)>n_{k+1}(t)+S_k(T_0)=n_{k+1}(t)+S(t)$, which contradicts the definition of $S(t)$.

\section{Proof of Theorem~\ref{theorem well ordered}}
To show that if $B$ is well ordered and $A$ does not scale into $B$, then $A$ evades $B$, we construct an $A$-martingale $M$, s.t. for any $B$-martingales $N_1,N_2,\ldots$ we construct a sequence $X$ on which $M$ succeeds, while every $N_i$ does not.

We begin with a rough outline of the proof ideas. $M$ always bets on ``heads.'' Before tackling every $N_i$, we first gain some money and ``put it  aside.'' Then we ratio-minimize against $N_i$. It will eventually make $M$ sufficiently richer than $N_i$, so that we can declare $N_i$ to be ``fragile'' now. This means that from now on the casino can make $N_i$ lose whenever it is ``active'' (i.e., makes a non-zero bet), since $M$ can afford losses until $N_i$ is bankrupt. When $N_i$ is not active, we can start tackling $N_{i+1}$, while constantly making sure that we have enough money kept aside for containing the fragile opponents. An important point is to show that once some $N_i$ becomes fragile, it remains fragile unless a lower-index martingale becomes active.
\par
Let $(a_n)$ be a sequence that is dense within $A\setminus \event{0}$, and such that each number in the sequence appears infinitely many times. For example, given a dense sequence $(x_n)$ in $A\setminus \event{0}$, the sequence
\[
x_1;\ x_1,x_2;\ x_1,x_2,x_3;\ x_1,x_2,x_3,x_4; \ldots
\] can be used.
\par
Since multiplying $A$ or $B$ by a positive constant does not make a difference, we may assume w.l.o.g. that $a_1=1$, and that $\inf (B\setminus \event{0}) =1$ ($B$ is well ordered; hence in particular $B\setminus \event{0}$ is bounded away from $0$).
\par
We construct $M$ and $X$ as follows. Denote $m(t) = M(X\restriction t)$, and similarly $m'(t) = M'(X\restriction t)$. Take integers
$$f(t,k) \geq \max \event{m(t),\abs{(a_{k+1}-a_k)/a_k}}.$$
We take $M(\ve) = a_1$. For the first $f(0,1)$ stages, $M' = a_1$. Then, after stage $t = f(0,1)$, $M' = a_2$ for the next $f(t,2)$ periods, and after stage $t' = t + f(t,2)$, $M' = a_3$ for $f(t',3)$ stages, and so on. But this goes on only as long as no ``tails'' appears. Whenever a ``tails'' appears, namely, at a stage $t$ where $X_t = \tail$, we revert to playing from the beginning of the sequence, i.e., $M' = a_1$ for the next $f(t,1)$ stages (or until another ``tails'' appears), then $M' = a_2$, etc.
\par
For $t \geq 0$ we define the function $\g(t)$, which is similar to $m'(t)$, but modifies sudden increases of $m'$ into more gradual ones.
\begin{figure}
\centering
\includegraphics[width=\textwidth,keepaspectratio=true,trim=56pt 165pt 90pt 198pt,clip=true]{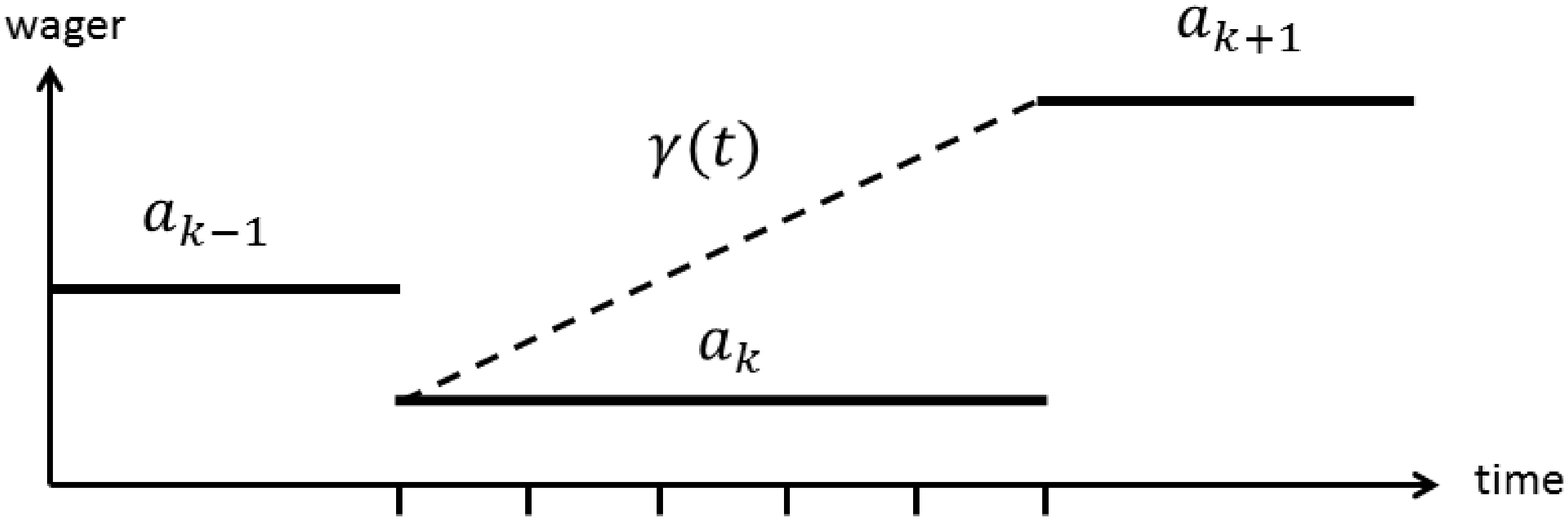} 
\caption{$\g$ and the wagers}
\end{figure}
At the beginning of a block of stages where $a_k$ is wagered (i.e., where $m'(\cdot)=a_k$), $\g$ equals $a_k$. If $a_{k+1} \leq a_k$, then $\g$ remains $a_k$ throughout this block. Otherwise, it linearly increases until reaching $a_{k+1}$ exactly at the beginning of the next block. I.e., suppose $m'(t) = a_k$, and let $s \leq t$ be the beginning of the block (of length $f(s,k)$) of $a_k$ wagers. If $a_{k+1} \leq a_k$ then $\g(t) = a_k$. Otherwise,
$$\g(t) = \frac{(s+ f(s,k) - t) \,a_k + (t-s) a_{k+1}}{f(s,k)}.$$
%
\begin{align*}
\text{Note:\qquad}  &(i)\,\g(t) \geq m'(t),\\
&(ii) \text{\,If } X_{t+1} = \tail  \text{ \,then } \g(t+1) = a_1 = 1,\\
&(iii) \text{\,If } X_{t+1} = \head \text{ \,then } \g(t+1)- \g(t) \leq m'(t).
\end{align*}
The last one follows from $m'(t) = a_k$, \,$\g(t+1)- \g(t) \leq (a_{k+1} - a_k) / f(s,k)$, and $f(s,k) \geq \abs{(a_{k+1}-a_k)/a_k}$\,, by the definition of $f$.
\par
Let $N_1,N_2,\ldots$ be $B$-martingales (and assume these martingales never bet on money that they do not have). Denote $n_i(t) = N_i(X\restriction t)$. To define the sequence $X$, denote $\nu_k(t) = k + n_1(t)+\ldots+n_k(t)$, and let $p=p(t) \geq 0$ be the largest integer such that
$$
m(t) - (\g(t) - 1) >   \nu_p(t).
$$
$N_1,\ldots,N_{p(t)}$ are the ``fragile'' martingales at time $t$. Define
$$\mu(t) = m(t)- (\nu_p(t) + 1)$$
and consider two cases. (i) If there exists some index $1 \leq j \leq p(t)$ s.t. $n'_j(t) \neq 0$, let $i$ be the smallest such index, and $X_{t+1}$ is chosen adversely to $n'_i$. (ii) Otherwise, $X_{t+1}$ is chosen by $\mu / N_{p+1}$-ratio-minimizing, i.e., if $\mu(t) > 0 $ and $n'_{p+1}(t) / m'(t) > n_{p+1}(t) / \mu(t)$ then $X_{t+1}= \tail$, and otherwise $X_{t+1}= \head$\,.
\par
We now show that these $M$ and $X$ indeed work.
\begin{lemma}\label{lem:fragile}
For any $t$, if $p(t) \geq i $ and $n'_j(t)=0$ for every $j < i$, then $p(t+1) \geq i$\,.
\end{lemma}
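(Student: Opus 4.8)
The plan is to reduce Lemma~\ref{lem:fragile} to the single inequality
$$
m(t+1)-(\g(t+1)-1)\;>\;\nu_i(t+1),
$$
because $p(t+1)$ is by definition the largest integer $p$ with $m(t+1)-(\g(t+1)-1)>\nu_p(t+1)$, and the set of such $p$ is bounded (as $\nu_p\ge p\to\infty$), so once the displayed inequality holds we get $p(t+1)\ge i$. I would compare both sides to their values at time $t$, where the defining property of $p(t)$ and the hypothesis $p(t)\ge i$ give $m(t)-(\g(t)-1)>\nu_{p(t)}(t)\ge\nu_i(t)$ (the last step uses that the $N_j$, not betting money they lack, stay non-negative, so $\nu_{p(t)}(t)-\nu_i(t)=(p(t)-i)+\sum_{j=i+1}^{p(t)}n_j(t)\ge0$). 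The right side is controlled by the Case~(i)/(ii) alternative in the definition of $X_{t+1}$, and the left side by the three ``Note'' properties of $\g$.

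\emph{Tracking the right side.} For $j<i$ the hypothesis gives $n'_j(t)=0$, hence $n_j(t+1)=n_j(t)$; and $n_i$ changes only if $n'_i(t)\neq0$ and $X_{t+1}$ is played adversely to $N_i$, which occurs precisely in Case~(i) when the smallest active index among $N_1,\dots,N_{p(t)}$ equals $i$, in which case $\abs{n'_i(t)}\ge1$ (as $\inf(B\setminus\event{0})=1$), so $n_i(t+1)\le n_i(t)-1$. Thus $\nu_i(t+1)\le\nu_i(t)$ always, and $\nu_i(t+1)\le\nu_i(t)-1$ when $X_{t+1}$ is adverse to $N_i$. \emph{Tracking the left side.} If $X_{t+1}=\head$, property (iii) gives $\g(t+1)-\g(t)\le m'(t)$, so $m(t+1)-(\g(t+1)-1)=[m(t)-(\g(t)-1)]+[m'(t)-(\g(t+1)-\g(t))]\ge m(t)-(\g(t)-1)$. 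If $X_{t+1}=\tail$, property (ii) gives $\g(t+1)=1$ and property (i) gives $\g(t)\ge m'(t)$, so $m(t+1)-(\g(t+1)-1)=m(t)-m'(t)\ge[m(t)-(\g(t)-1)]-1$. In short, the left side never decreases, except possibly by $1$ when a ``tails'' is played.

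\emph{Combining.} If $X_{t+1}=\head$ we are done immediately: the left side is $\ge m(t)-(\g(t)-1)>\nu_{p(t)}(t)\ge\nu_i(t)\ge\nu_i(t+1)$. If $X_{t+1}=\tail$ I must recover one unit of slack, and there are exactly three sub-cases. (a) Tails played adversely to $N_i$ itself: then $\nu_i(t+1)\le\nu_i(t)-1$ absorbs the loss, and $m(t+1)-(\g(t+1)-1)\ge m(t)-(\g(t)-1)-1>\nu_i(t)-1\ge\nu_i(t+1)$. (b) Tails played adversely to some $N_{i'}$ with $i<i'\le p(t)$: then $\nu_i(t+1)=\nu_i(t)$ (no $N_j$ with $j\le i$ moved), and since $N_{i'}$ just bet a positive amount without over-betting, $n_{i'}(t)\ge\abs{n'_{i'}(t)}\ge1$, so $\nu_{p(t)}(t)\ge\nu_{i'}(t)\ge\nu_i(t)+(i'-i)+n_{i'}(t)\ge\nu_i(t)+2$, whence $m(t+1)-(\g(t+1)-1)>\nu_{p(t)}(t)-1\ge\nu_i(t)+1>\nu_i(t+1)$. (c) Tails produced by Case~(ii) ratio-minimization against $N_{p+1}$: then $\mu(t)>0$ and $n'_{p+1}(t)/m'(t)>n_{p+1}(t)/\mu(t)$; since $N_{p+1}$ does not over-bet this forces $n'_{p+1}(t)>0$, then $n_{p+1}(t)\ge n'_{p+1}(t)>0$, and dividing the ratio inequality by $n_{p+1}(t)$ gives $\mu(t)>m'(t)$, i.e.\ $m(t)-m'(t)>\nu_{p(t)}(t)+1\ge\nu_i(t)+1=\nu_i(t+1)+1$; and here the left side equals $m(t)-m'(t)$, so again it strictly exceeds $\nu_i(t+1)$.

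I expect the ``tails'' case to be the crux, and within it sub-case (c): extracting the strict inequality $\mu(t)>m'(t)$ from the ratio-minimization rule together with the no-over-betting assumption on $N_{p+1}$. This is exactly what the ``$-1$'' correction in the quantity $m(t)-(\g(t)-1)$ and the definition $\mu(t)=m(t)-(\nu_{p(t)}(t)+1)$ are tuned to make work; the remaining sub-cases and the bookkeeping for $\nu_i$ are routine.
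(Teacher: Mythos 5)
Your proof is correct and follows essentially the same route as the paper: reduce to the inequality $m(t+1)-(\g(t+1)-1)>\nu_i(t+1)$, show that a ``heads'' never decreases the left side while a ``tails'' costs at most $1$, and recover that unit either from the adverse loss $n_i(t+1)\le n_i(t)-1$ or from the key estimate $\mu(t)>m'(t)$ extracted from the ratio-minimization condition and no-over-betting. The only cosmetic difference is your sub-case (b): the paper avoids it by proving the inequality for the smallest active index $i'\ge i$ (which already yields $p(t+1)\ge i'\ge i$), whereas you keep the original index $i$ and compensate with the extra slack $\nu_{p(t)}(t)\ge\nu_i(t)+2$; both work.
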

\def \lside{\mathcal{L}}
\begin{proof}[Proof of Lemma \ref{lem:fragile}]
We prove the following equivalent claim:\\(I) If $i \leq p(t)$ is the smallest index such that $n'_i(t) \neq 0$, then $i \leq p(t+1)$. (II) If  $n'_j(t) = 0$ for any $j \leq p(t)$, then $p(t) \leq p(t+1)$.
\par
In case (II), denote $i=p(t)$. Then in both cases $$m(t) - (\g(t) - 1) >   \nu_i(t)$$ is known. Let $\lside(t)$ designate the LHS of this inequality. We need to show that $\lside(t+1) > \nu_i(t+1)$. Note that for any $j< i$, $n_j(t+1) = n_j(t)$, and that if $X_{t+1} = \head$ \,then $\lside(t+1) \geq \lside(t)$, since $m(t+1) = m(t)+ m'(t)$ and $\g(t+1) \leq \g(t) + m'(t)$.
\par
(I) In this case the casino makes $i$ lose, hence $n_i(t+1) \leq n_i(t)-1$ (recall that $\inf (B\setminus \event{0})=1$); therefore $\nu_i(t+1) \leq \nu_i(t)-1$. If $X_{t+1} = \head$ \,we are done. If $X_{t+1} = \tail$ \,then $m(t+1) = m(t) - m'(t)$, and $\g(t+1)=1$; therefore, $\lside(t+1) = \lside(t) - m'(t) + (\g(t) -1 ) \geq \lside(t) - \g(t) + (\g(t) -1 ) = \lside(t) -1$.
\par
(II) In this case $n_i(t+1) = n_i(t)$; hence $\nu_i(t+1) = \nu_i(t)$. If $X_{t+1} = \head$ \,we are done. If $X_{t+1} = \tail$ \,then $m(t+1) = m(t) -m'(t)$ and $\g(t+1) = 1$. But $X_{t+1} = \tail$ \,also implies (by the definition of $X$) that $\mu(t) > 0$ and $n'_{i+1}(t) / m'(t) > n_{i+1}(t) / \mu(t)$. Since $n'_{i+1}(t)$ is always $\leq n_{i+1}(t)$, we get that $\mu(t) > m'(t)$. Now, $\lside(t+1) = (m(t) - m'(t)) - (1-1 ) = m(t) - m'(t) > m(t) - \mu(t) = \nu_i(t)+1$, because $\mu(t)$ is \,$m(t) - (\nu_i(t) +1)$. Thus, $\lside(t+1) > \nu_i(t)+1 > \nu_i(t) = \nu_i(t+1)$.
\end{proof}
Remark: The above argument also proves that $M$ is never bankrupt, i.e., $m(t) \geq m'(t)$, and moreover $m(t) \geq \g(t)$, as follows.
\par
In the beginning $1=m(0)\geq \g(0)=1$. As long as $p(t)=0$
we are in case (II). In this case, if $X_{t+1}=\head$ \,then $\lside(t)$ does not decrease; hence $m(t)-\g(t)$ does not decrease. And if $X_{t+1}=\tail$ \,we just saw that actually $\lside(t+1) > 1 + \nu_i(t+1)$, which implies that $m(t)-\g(t) > 0$.
\par
Once $p(t)>0$, then $\nu_i(t) \geq 1$, hence $\lside(t) > \nu_i(t)$ implies that $m(t)-\g(t) > 0$; and Lemma \ref{lem:fragile} implies that $p(t)$ remains $> 0$.
\begin{lemma}\label{lem:inductive}
For any $i$ there exists a stage $T_i$ \,s.t. for any $t > T_i$\,, $n'_1(t) = n'_2(t) = \ldots = n'_i(t) = 0$, and $p(t) \geq i$.
\end{lemma}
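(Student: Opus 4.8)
The plan is to prove Lemma~\ref{lem:inductive} by induction on $i$. The case $i=0$ is vacuous: there are no conditions $n'_j(t)=0$ to check, and $p(t)\ge 0$ holds for every $t$ because the Remark following Lemma~\ref{lem:fragile} gives $m(t)\ge\g(t)$, i.e.\ $m(t)-(\g(t)-1)>\nu_0(t)=0$. For the inductive step I fix $i\ge 1$ and a stage $T_{i-1}$ as supplied by the lemma for $i-1$, so that $n'_1(t)=\dots=n'_{i-1}(t)=0$ and $p(t)\ge i-1$ for all $t>T_{i-1}$; then $n_1,\dots,n_{i-1}$ are constant past $T_{i-1}$, so $\nu_{i-1}(t)$ equals a constant, say $\nu^*$, there. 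It remains to show \textbf{(A)} that $p(t_0)\ge i$ for some $t_0>T_{i-1}$, and \textbf{(B)} that past such a $t_0$ the value $p(t)$ never again drops below $i$ and $n'_i(t)$ is eventually $0$; one then takes $T_i$ to be a stage beyond $t_0$ past which $n'_i\equiv 0$.

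Part (B) is routine and uses only the already-proved Lemma~\ref{lem:fragile}. As $n'_j(t)=0$ for every $j<i$ whenever $t>T_{i-1}$, Lemma~\ref{lem:fragile} shows that once $p$ reaches $i$ after $T_{i-1}$ it stays $\ge i$. Hence for $t\ge t_0$ the martingale $N_i$ is permanently fragile, and whenever it is active it is the lowest-indexed active fragile martingale (the lower-indexed ones being silent past $T_{i-1}$), so the construction is in Case~I with index $i$ and the casino answers adversely to $N_i$; since $\inf(B\setminus\event{0})=1$ this costs $N_i$ at least one unit. Thus $n_i$ is non-increasing on $[t_0,\infty)$ and drops by at least $1$ at every active stage, so --- being non-negative --- it is active only finitely often.

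The substance is Part (A), which I would prove by contradiction: assume $p(t)=i-1$ for all $t>T_{i-1}$. Then the construction is forever in Case~II, so $X_{t+1}$ is chosen by $\mu/N_i$-ratio-minimizing, where $\mu(t)=m(t)-(\nu^*+1)$ has the same increments as $M$. A short check shows that a ``$\tail$'' is played only when $\mu(t)>m'(t)>0$, so $\mu$ stays positive once positive, and it does become positive because ``$\head$''s are played whenever $\mu\le 0$ and $M$ always gains on ``$\head$''; from that stage $r(t):=n_i(t)/\mu(t)$ is non-increasing, tending to some $L$. The hypothesis $p(t)<i$ reads exactly $m(t)-\g(t)-n_i(t)\le\nu^*$; evaluating it at the starts of the infinitely many blocks in which $M$ wagers the base value $a_1=1$ (where $\g=1$, and which recur since the base wager recurs in $\{a_n\}$ and since every reset restarts at it) gives $n_i\ge\mu$ and so $r\ge 1$ there, hence $L\ge 1>0$. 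To conclude I would rule out a positive limiting ratio: restricting to the blocks of a fixed wager $a^*\in\{a_n\}$, along such a block $m'\equiv a^*$ and (since $r$ converges) the total decrease of $r$ across the block tends to $0$ as the block occurs later, whereas any step on which $N_i$'s increment is bounded away from $La^*$ --- a ``$\tail$'' or a sub-threshold ``$\head$'' --- decreases $r$ by a definite amount, so such steps are a vanishing fraction of the (long) block; the remaining steps carry an increment of $N_i$ lying in $B\setminus\event{0}$ and arbitrarily close to $La^*$, forcing $La^*\in\bar B$. Since $\{a_n\}$ is dense in $A\setminus\event{0}$, this yields $L\cdot A\subseteq\bar B$, i.e.\ $A$ scales into $B$ --- contradicting the standing hypothesis and closing the induction.

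The hard part will be making the ``vanishing fraction'' estimate precise when $A$ is unbounded. Then $m'$ is unbounded, the capital $\mu$ can grow by an unbounded factor inside a single block (so the per-step decrease of $r$ need not be uniformly bounded below), and Corollary~\ref{cor ratio} --- even in its relaxed form --- is not directly available; a careful look at the ratio-minimization inside long blocks is required. This is precisely what the block structure of $M$ together with the smoothing $\g$ are designed for: $\g$ spreads each jump of the wager gradually over the long preceding block, so the reserve requirement $\g-1$ never overtakes the capital accumulated before the jump, while the resets-to-$a_1=1$ keep the well-behaved base wager recurring --- the two devices that keep $m\ge\g$, and hence the whole scheme, in force. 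By contrast, Part (B) and the reduction of Part (A) to the ratio-minimization statement above are straightforward.
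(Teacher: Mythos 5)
Your skeleton matches the paper's: induction on $i$, with your Part (B) handled exactly as in the paper via Lemma~\ref{lem:fragile} plus adversarial answers to $N_i$'s bets, and with the correct endgame observation that non-fragility of $N_i$ at the start of a wager-$1$ block (where $\g=1$) forces $n_i\ge\mu$ there --- this is the paper's final step read contrapositively. The genuine gap is in the one step that carries the weight of the theorem: ruling out a positive limiting ratio $L$. Your argument for it never invokes the well-orderedness of $B$, which is precisely the hypothesis that makes the lemma true, and it rests on a claim that is false as stated: a $\tail$ (over-bet) or a sub-threshold $\head$ need \emph{not} decrease the ratio $n_i/\mu$ by a definite amount --- the decrease is of order $(\text{overshoot})/\mu(t)$, and $\mu(t)$ grows without bound, so this fails even when $A$ is bounded (you flag the problem only for unbounded $A$). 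Worse, without well-orderedness an over-bet need not overshoot $La^*$ by any fixed margin, so your dichotomy ``increment bounded away from $La^*$'' versus ``increment close to $La^*$'' does not align with ``$\tail$ or sub-threshold $\head$'' versus ``the rest''. Finally, your per-wager analysis presupposes that long, complete, $\tail$-free blocks of each $a_j$ eventually occur, which is itself something to prove: if over-bets recur forever, the wager sequence keeps resetting to $a_1$ and never reaches $a_j$ for large $j$, and then you cannot conclude $L\cdot A\subseteq\bar B$.

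The paper closes all three holes at once by switching from the ratio $q(t)=n_i(t)/\mu(t)$ to the deficiency $r(t)=n_i(t)-L\mu(t)\ge 0$ and using well-orderedness twice. First, if the wagers were confined to $a_1,\dots,a_k$ there would be infinitely many over-bets; an over-bet at wager $a_j$ has $n'_i(t)>q(t)a_j\ge La_j$, and the gap $(La_j,La_j+\d_j)\cap B=\emptyset$ supplied by well-orderedness forces $n'_i(t)\ge La_j+\d_j$, so $r$ drops by at least $\min_j\d_j$ at each over-bet (and, by the same gap, never increases at non-over-bets), contradicting $r\ge 0$; hence complete over-bet-free blocks of every $a_j$ do occur. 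Second, assuming $L>0$, one does not need $La^*\in\bar B$ for \emph{all} values: it suffices to pick a single $a$ with $\mathrm{dist}(La,\bar B)=\d>0$ (possible since $A$ does not scale into $B$) and note that in a complete over-bet-free block of $a$ every increment satisfies $n'_i(t)\le La-\d$, so $r$ decreases by $\d$ at \emph{every} step; since the block has length $f(T_a,j)\ge\mu(T_a)$ and $r(T_a)<\d\,\mu(T_a)$, this drives $r$ negative. Your proposal would need to be rebuilt around these two uses of well-orderedness (and around the additive quantity $r$ rather than the ratio) to go through.
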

\begin{proof}[Proof of Lemma \ref{lem:inductive}]
We proceed by induction over $i \geq 0$; namely, the induction hypothesis is that the lemma holds for $i-1$. Note that the induction base case \,$i=0$ holds vacuously.\footnote{Incidentally, the inequality defining fragility always holds for $p(t)=0$, as $m(t) \geq \g(t)$ and $\nu_0(t) = 0$ imply that $m(t) - (\g(t)-1) > \nu_0(t)$.}
\par
If $p(t_0) \geq i$ for some stage $t_0 > T_{i-1}$, then $p(t) \geq i$ for every $t \geq t_0$, by lemma \ref{lem:fragile}. From this stage on, the casino chooses adversely to $i$ whenever $i$ is active (because the lower-index players are not active). Therefore, $i$ will be active at no more than $n_i(t_0)$ stages after $t_0$, since  afterwards $i$ has nothing to wager, and we are done.
\par
So assume by way of contradiction that $p(t) = i-1$ for every $t > T_{i-1}$\,. Then $X_{t+1}$ is $\mu/N_i$-ratio-minimizing. As long as $\mu(t) \leq 0$ we get ``heads''; therefore, from some stage on, $\mu > 0$ (as every $a_k$ appears infinitely many times, the sum of the wagers will not converge). Denote $q(t) = n_i(t)/ \mu(t)$. $q(t) \geq 0$ is non-increasing and therefore converges to a limit $L$. Denote $0 \leq r(t) = n_i(t) - L  \mu(t) $.

Suppose there exists some $k$ s.t. the wagers $m'(t)$ never reach beyond $a_1,\ldots,a_k$\,. Hence, there are infinitely many stages $t$ where $i$ over-bets, i.e., $n'_i(t) > q(t)m'(t)$. For $1 \leq j \leq k$, let $x_j = L  \,a_j$. Since $B$ is well ordered, there exists some $\d_j > 0$ \,s.t. $(x_j,\,x_j+\d_j) \cap B = \emptyset$\,. Since $q(t) \to L$\,, $q(t) < L + \min_{1\leq j \leq k}(\d_j/a_j)$ \,for $t$ large enough.
\par
When $i$ over-bets and $m'(t) = a_j$, then $r(t+1) = n_i(t+1) - L \mu(t+1) \leq n_i(t) - (La_j + \d_j) - L (\mu(t) - a_j )=  r(t) - \d_j$. When $i$ does not over-bet, then $n'_i(t) \leq x_j = L a_j$\,, and $r(t+1) = n_i(t+1) - L \mu(t+1) \leq (n_i(t)+La_j) - L(\mu(t) + a_j) =   r(t) $. Therefore $r(t)$ does not increase, and infinitely many times it decreases by at least $\d = \min_{1 \leq j\leq k}\d_j > 0$; hence eventually $r(t) < 0$, which is a contradiction.
\par
Therefore, there does not exist an index $k$ as above. This implies that for any $j$, there is a stage $t$ after which $a_j$ is wagered $f(t,j)$ consecutive times, and $N_i$ does not over-bet (otherwise $a_{j+1}$ cannot be reached). Now suppose that $L>0$\,. Let $A_0 = \event{a_1,a_2,\ldots}$ be the set of all the values that the sequence $(a_n)$ takes. $A_0$ is dense in $A$, and $L\cdot A \nsubseteq \bar B$ (since $A$ does not scale into $B$); therefore, also $L\cdot A_0 \nsubseteq \bar B$. Hence, there exists an $a \in A_0$ s.t. the distance between $La$ and $\bar B$ is $\d > 0$.
\par
Let $\D = \min \event{\d/a, \d}$. For $t$ larger than some $T_\D$, \,$q(t) < L + \D$. Since $a$ appears infinitely many times in the sequence $(a_n)$, there exist $T_a >T_\D$ and an index $j$, s.t. $a_j = a$ is wagered $f(T_a,j)$ consecutive times, and at each of these times $n'_i(t) \leq La-\d$, as otherwise $n'_i(t) \geq La + \d$, but that is over-betting since $(La + \d )/ a  = L + \d/a > q(t)$.
Hence, $r(t+1) \leq n_i(t)+La-\d - L(\,\mu(t) + a) = r(t) - \d$. But $q(T_a) < L + \d$; therefore $n_i(T_a) < (L+\d)\,\mu(T_a)$; hence $r(T_a) < \d \mu(T_a)$. By the definition of $f$, $f(T_a,j) \geq m(T_a) \geq \mu(T_a)$; therefore after those $f(T_a,j)$ times, $r < 0$\,. This cannot be; therefore $L=0$.
\par
As $q(t) \to 0$, surely $q(t) < 1$ for large enough $t$, namely, $\mu(t) > n_i(t)$. Since $\mu(t) = m(t) - (\nu_{i-1}(t)+1)$, we get $m(t) > n_i(t) + \nu_{i-1}(t) + 1 = n_i(t) + ((i-1) + n_1(t) + \ldots + n_{i-1}(t)) + 1 = i + n_1(t) + \ldots + n_i(t) = \nu_i(t)$. At some stage $t$, $M$ starts wagering $1$.
For this $t$, $\g(t) = m'(t) =1$; hence $m(t) - (\g(t)-1) = m(t) > \nu_i(t)$, contradicting our assumption that $i$ is not fragile.
\end{proof}
Lemma \ref{lem:inductive} states that any $N_i$ is only active a finite number of times, and therefore it is bounded; it also states that for any $i$ and large enough $t$, $m(t) - (\g(t)-1) >  \nu_i(t)$, hence $m(t) >  \nu_i(t) + (\g(t)-1) >  \nu_i(t) -1 \geq \,i -1$, and therefore $m(t) \to \infty $.
\section{Extensions and Further Research}\label{sec:Discussion}
\subsection{Extensions}
The definition of anticipation corresponds to the game where gambler 0 first announces her $A$-martingale, then the regular gamblers announce their $B$-martingales, and then the casino chooses a sequence. Thus, the regular gamblers know the future actions of gambler 0, the casino knows the future actions of all gamblers, and gambler 0 does not even observe the past bets of other gamblers.
\par
Our proofs, however, do not rely on this state of affairs. One may consider variants of this game in which the casino does not know the future actions, or the regular gamblers do not know the future actions of gambler 0, or gambler 0 does observe the bets of others; or any combination thereof. Our results hold for all these variants.
\par
One of these variants looks more like a ``classic'' repeated game: at each stage of the game, first gambler 0 makes a bet, then the regular gamblers make bets, and then the casino chooses red or black (and everything is observed by all).
\par
Another generalization is that the set $B$ need not be the same for all regular gamblers. That is, the results still apply for sets $B_1,B_2,B_3\ldots$ of wagers for gamblers $1,2,3\ldots$, if the conditions hold for each $B_i$ separately (of course, the conditions may not hold for their union).
\par
\cite{bienvenu12} consider the case of computable martingales, with $A$ being the set of all integers, and $B_i$ finite sets of integers. They use a probabilistic argument to show evasion. Our proof provides, in particular, a construction of the casino sequence for this case. The constructed sequence is computable relative to the history of bets.

\subsection{Further Research}
It seems possible that our proof of Theorem~\ref{theorem well ordered} could be modified so as to avoid the assumption that $B$ is well ordered. We conjecture that Theorems~\ref{theorem bounded} and \ref{theorem well ordered} can be unified as the following statement.
\begin{conjecture}
Let $A,B\subset \mathbb R_+$. If $0\not\in\overline {B \setminus \{0\}}$, then $B$ anticipates $A$ only if $A$ scales into $B$.
\end{conjecture}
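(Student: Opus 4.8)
\textbf{The plan} is to take the proof of Theorem~\ref{theorem well ordered} essentially verbatim and to replace the single step in which well-orderedness of $B$ is actually invoked. As there, by Lemma~\ref{lem close} I assume $B$ is closed; since $0\notin\overline{B\setminus\{0\}}$ I scale $B$ so that $\inf(B\setminus\{0\})=1$, add $0$ to $B$ (which changes neither $\overline{B\setminus\{0\}}$ nor whether $A$ scales into $B$, as $0\notin rA$ for $r>0$), and scale $A$ so that $1\in A$. I fix a sequence $(a_n)$ that is dense in $A\setminus\{0\}$, in which every value recurs infinitely often, with $a_1=1$. Assuming $A$ does not scale into $B$, I must build an $A$-martingale $M$, not depending on the opponents, such that for every sequence of (WLOG non-negative) $B$-martingales $N_1,N_2,\dots$ there is a sequence $X$ on which $M$ succeeds while every $N_i$ stays bounded.

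\textbf{What carries over.} The construction of $M$ (always betting heads; cycling through blocks of wagers $a_1,a_2,\dots$ of integer length $f(t,k)\ge\max\{m(t),\,|(a_{k+1}-a_k)/a_k|\}$; reset to $a_1$ whenever a tails occurs), the gradualized increment $\gamma$, the quantities $\nu_k(t),\,p(t),\,\mu(t)$, and the case-(i)/case-(ii) dichotomy use nothing beyond $\inf(B\setminus\{0\})=1$, so I would keep them unchanged. In particular Lemma~\ref{lem:fragile} and the remark that $M$ stays solvent with $m(t)\ge\gamma(t)$ remain valid, and so does the inductive skeleton of Lemma~\ref{lem:inductive}: granting the statement for $i-1$, fix $T_{i-1}$ past which $N_1,\dots,N_{i-1}$ are passive and $p(t)\ge i-1$, and suppose toward a contradiction that $p(t)=i-1$ for all $t>T_{i-1}$; then after $T_{i-1}$ only case~(ii) can occur, $X_{t+1}$ is $\mu/N_i$-ratio-minimizing, and a tails after $T_{i-1}$ occurs exactly when $\mu(t)>0$ and $N_i$ over-bets, $n'_i(t)>q(t)m'(t)$ with $q(t)=n_i(t)/\mu(t)$.

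\textbf{The one change.} Well-ordering of $B$ is used in the proof of Lemma~\ref{lem:inductive} only to handle the case in which $M$ never wagers beyond $a_1,\dots,a_k$ for some fixed $k$ (every ascent aborted by position $k$) — there the existing argument uses a gap $(La_j,La_j+\delta_j)$ disjoint from $B$. I would replace it as follows. If every ascent is aborted by position $k$ then a tails occurs infinitely often after $T_{i-1}$ (resets are the only thing that abort ascents, and $M$ never stops ascending); each such tails is a case-(ii) over-bet, so $n'_i(t)>q(t)m'(t)\ge 0$, hence $n'_i(t)>0$, hence $|n'_i(t)|=n'_i(t)\in B\setminus\{0\}$, hence $n'_i(t)\ge 1$; thus $N_i$ loses at least $1$ at infinitely many stages after $T_{i-1}$, contradicting $N_i\ge 0$ (whose total post-$T_{i-1}$ loss is at most $n_i(T_{i-1})$). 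So no such $k$ exists: $M$'s ascents are unbounded, and the same count gives at most $n_i(T_{i-1})$ resets after $T_{i-1}$, so eventually $M$ performs one uninterrupted infinite ascent through all the $a_n$. Since $a_1=1$ recurs in $(a_n)$, the wagers along this ascent sum to $\infty$, whence $m(t)\to\infty$ and $\mu(t)\to\infty$; then $q(t)$ is eventually defined, non-increasing, converges to some $L\ge 0$, and $r(t)=n_i(t)-L\mu(t)=(q(t)-L)\mu(t)\ge 0$. From here the existing proof runs verbatim: if $L>0$ then, since $A$ does not scale into $B$ and $A_0=\{a_n\}$ is dense in $A$, $L\cdot A_0\not\subseteq\bar B$, so I pick $a\in A_0$ with $\mathrm{dist}(La,\bar B)=\delta>0$; during a complete over-bet-free block of $a$-wagers (length $\ge m(\cdot)\ge\mu(\cdot)$) the value $n'_i(t)$ is forced to be $\le La-\delta$, so $r$ drops by at least $\delta$ per step and becomes negative — impossible. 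Hence $L=0$, so $q(t)<1$ eventually, $m(t)>\nu_i(t)$, and at the start of a block where $a_1=1$ is wagered we get $m(t)-(\gamma(t)-1)=m(t)>\nu_i(t)$, i.e.\ $p(t)\ge i$ — contradicting $p(t)=i-1$. The induction closes, and then $m(t)\to\infty$ (eventually $m(t)>\nu_i(t)\ge i-1$ for every $i$) while each $N_i$ is active only finitely often, hence bounded, just as in Theorem~\ref{theorem well ordered}.

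\textbf{The main obstacle} I anticipate is confirming that well-ordering is genuinely confined to that single case-branch and that the substitute argument is airtight — in particular, that in the general setting the ratio limit $L$ really is well-defined (equivalently, that resets after $T_{i-1}$ are finite in number) and that the gradualized increment $\gamma$, the possibly unbounded block-lengths $f(t,k)$, and the ratio-minimization still cooperate when $A$ is unbounded \emph{and} $B$ may accumulate from the right at the same time — a combination that neither Theorem~\ref{theorem bounded} nor Theorem~\ref{theorem well ordered} has to face. Re-auditing the case-(i)/case-(ii) bookkeeping and the solvency remark under the weaker hypothesis is where the real work lies; if some hidden dependence on well-ordering surfaces there, the likely remedy is to further ``gradualize'' the wager schedule or the fragility threshold, in the spirit of $\gamma$, so as to absorb it.
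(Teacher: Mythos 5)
First, note that the paper does not prove this statement --- it is stated as an open conjecture, offered precisely because the authors could not see how to remove the well-ordering hypothesis from Theorem~\ref{theorem well ordered}. So your proposal is an attempt at an open problem, and it should be judged on its own; unfortunately, the one step you change is exactly where it breaks. You correctly locate the sole use of well-ordering (the sub-case of Lemma~\ref{lem:inductive} in which the wagers never get past $a_1,\dots,a_k$), but your substitute argument there is invalid. You argue that infinitely many forced tails, each costing $N_i$ at least $1$, contradict $N_i\ge 0$ ``whose total post-$T_{i-1}$ loss is at most $n_i(T_{i-1})$.'' That bounds only the \emph{net} change, not the sum of the losses: between over-bets, $N_i$ may place winning (non-over-betting) heads bets and replenish its capital. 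A positive non-over-bet must satisfy $1\le n'_i(t)\le q(t)m'(t)$, which is perfectly possible whenever $q(t)\,\max_{j\le k}a_j\ge 1$, i.e.\ whenever $L\ge 1/\max_{j\le k}a_j$; in that regime $N_i$ can lose $1$ infinitely often while staying non-negative, and no contradiction arises. (The case $L$ small enough is not the problem --- there one can even show $\mu(t)>n_i(t)$ eventually and conclude $p(t)\ge i$ at the next block start; it is the case of large $L$ that resists.)

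This failure is not incidental: it is exactly the configuration that well-ordering excludes. The paper's argument works because the gap $(La_j,\,La_j+\delta_j)\cap B=\emptyset$ forces every non-over-bet to satisfy $n'_i(t)\le La_j$, which makes $r(t)=n_i(t)-L\mu(t)$ non-increasing on heads steps while it drops by $\delta_j$ at each over-bet. If $B$ is allowed to accumulate at $La_j$ from the right, $N_i$ can bet $La_j+\epsilon_t\in B$ with $0<\epsilon_t\le (q(t)-L)a_j$ on heads steps, so that $r$ creeps upward between over-bets and the ``$r$ eventually becomes negative'' conclusion is lost. Your later claim that ``the same count gives at most $n_i(T_{i-1})$ resets after $T_{i-1}$'' inherits the same error (and is not actually needed for the unbounded-wager branch, where the paper only requires, for each $j$, one complete uninterrupted block of $a_j$-wagers, which follows from $a_{j+1}$ being reached at all). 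So the proposal does not prove the conjecture; the honest reading of your own ``main obstacle'' paragraph is that the obstacle is real and unresolved.
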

The present paper strove to understand the effect of restricting the wager sets on the prediction power of martingales. As is often the case, understanding one thing brings up many new questions. We list just a few.
\begin{itemize}
\item Under the assumptions of Theorem~\ref{theorem bounded}, $A$ can evade $B$ through a history-independent martingale. Is this also the case under the assumptions of Theorem~\ref{theorem well ordered}?
\item Are single anticipation and countable anticipation different? That is, are there sets $A,B\subset\mathbb R_+$, such that $B$ countably, but not singly, anticipates $A$?
\item What can be said about the anticipation relation between sets that \emph{do} include 0 as an accumulation point, for example, $\{2^{-n}\}_{n=1}^\infty$, $\{\frac 1 n\}_{n=1}^\infty$, and $\mathbb R_+$?
\item \cite{buss} introduced martingales defined by probabilistic strategies. How do these martingales behave in our framework?
\end{itemize}

\section{Acknowledgement}
We wish to thank Abraham Neyman and Bernhard von Stengle for their useful comments.


\begin{thebibliography}{}
\bibitem[Bienvenu et~al., 2012]{bienvenu12}
Bienvenu, L., Stephan, F., and Teutsch, J. (2012).
\newblock How Powerful Are Integer-Valued Martingales?
\newblock {\em Theory of Computing Systems}, 51(3):330--351.

\bibitem[Buss and Minnes, 2013]{buss}
Buss, S. and Minnes, M. (2013).
\newblock Probabilistic Algorithmic Randomness.
\newblock {\em Journal of Symbolic Logic}, 78(2):579--601.

\bibitem[Chalcraft et~al., 2012]{chalcraft12}
Chalcraft, A., Dougherty, R., Freiling, C., and Teutsch, J. (2012).
\newblock How to Build a Probability-Free Casino.
\newblock {\em Information and Computation}, 211:160--164.

\bibitem[Downey and Riemann, 2007]{downey-online}
Downey, R. G. and Riemann, J. (2007).
\newblock Algorithmic Randomness.
\newblock {\em Scholarpedia} 2(10):2574,
\newblock http://www.scholarpedia.org/article/algorithmic\_\linebreak{}randomness.

\bibitem[Peretz, 2013]{peretz13}
Peretz, R. (2013).
\newblock Effective Martingales with Restricted Wagers.
\newblock http://arxiv.org/abs/1301.7465.

\bibitem[Teutsch, 2013]{teutsch13}
Teutsch, J. (2013).
\newblock A Savings Paradox for Integer-Valued Gambling Strategies.
\newblock {\em International Journal of Game Theory}, forthcoming.

\end{thebibliography}

\end{document}